\theoremstyle{plain}
\newtheorem{lem}{\protect\lemmaname}
\theoremstyle{plain}
\newtheorem{prop}{\protect\propositionname}
\providecommand{\lemmaname}{Lemma}
\providecommand{\propositionname}{Proposition}
\begin{document}
\title{Finite sample inference for empirical Bayesian methods}
\author{Hien Duy Nguyen$^{1,2}$ \and Mayetri Gupta$^3$  }
\date{%
    $^1$ School of Mathematics and Physics, University of Queensland, St. Lucia 4067\\%
    $^2$ Department of Mathematics and Statistics, La Trobe University, Bundoora 3086\\%
    $^3$ School of Mathematics and Statistics, University of Glasgow, Glasgow G12 8QQ\\[2ex]%
}
\maketitle

\begin{abstract}
In recent years, empirical Bayesian (EB) inference has become an attractive approach for estimation in parametric models arising in a variety of real-life problems, especially in complex and high-dimensional scientific applications. However, compared to the relative abundance of available general methods for computing point estimators in the EB framework,  the construction of confidence sets and hypothesis tests with good theoretical properties remains difficult and problem specific. Motivated by the universal inference framework of \citet{Wasserman:2020aa},
we propose a general and universal method, based on holdout likelihood ratios, and utilizing the hierarchical structure of the specified Bayesian model for constructing confidence sets and hypothesis tests that are finite sample valid. We illustrate our method through a range of numerical studies and real data applications, which demonstrate that the approach is able to generate useful and meaningful inferential statements in the relevant contexts.
    
\end{abstract}

\section{\label{sec:Introduction}Introduction}

Let $\mathbf{D}_{n}=\left(\bm{X}_{i}\right)_{i\in\left[n\right]}$
be our data, presented as a sequence of $n\in\mathbb{N}=\left\{ 1,2,\dots\right\} $
random variables $\bm{X}_{i}\in \mathbb{X}$ ($i\in\left[n\right]=\left\{ 1,\dots,n\right\} $).
For each $i\in\left[n\right]$, let $\bm{\Theta}_{i}\in\mathbb{T}$
be a random variable with probability density function (PDF) $\pi\left(\bm{\theta}_{i};\bm{\psi}\right)$,
where $\bm{\psi}\in\mathbb{P}$ is a  hyperparameter. Furthermore,
suppose that $\left[ \bm{X}_{i}|\bm{\Theta}_{i}=\bm{\theta}_{i}\right] $
arises from a family of data generating processes (DGPs) with conditional
PDFs
\[
f\left(\bm{x}_{i}|\bm{\Theta}_{i}=\bm{\theta}_{i}\right)=f\left(\bm{x}_{i}|\bm{\theta}_{i}\right)\text{,}
\]
and that the sequence $\left(\left(\bm{X}_{i},\bm{\Theta}_{i}\right)\right)_{i\in\left[n\right]}$
is independent.

Suppose that $\left(\bm{\Theta}_{i}\right)_{i\in\left[n\right]}$
is realized at $\bm{\vartheta}_{n}^{*}=\left(\bm{\theta}_{i}^{*}\right)_{i\in\left[n\right]}$,
where each realization $\bm{\theta}_{i}^{*}$ ($i\in\left[n\right]$)
is unknown, and where $\bm{\psi}$ is also unknown. Let $\mathbb{I}\subset\left[n\right]$, and write $\bm{\vartheta}_{\mathbb{I}}^{*}=\left(\bm{\theta}_{i}^{*}\right)_{i\in\mathbb{I}}$.
When $\mathbb{I}=\left\{ i\right\} $, we shall use the shorthand
$\mathbb{I}=i$, where it causes no confusion. 

Under this setup, for significance level $\alpha\in\left(0,1\right)$,
we wish to draw inference regarding the realized sequence $\bm{\vartheta}_{n}^{*}$
by way of constructing $100\left(1-\alpha\right)\%$ confidence sets
$\mathcal{C}_{i}^{\alpha}\left(\mathbf{D}_{n}\right)$ that satisfy:
\begin{equation}
\text{Pr}_{\bm{\theta}_{i}^{*}}\left[\bm{\theta}_{i}^{*}\in\mathcal{C}_{i}^{\alpha}\left(\mathbf{D}_{n}\right)\right]\ge1-\alpha\text{,}\label{eq: good confidence}
\end{equation}
and $p$-values $P_{\mathbb{I}}\left(\mathbf{D}_{n}\right)$ for testing
null hypotheses $\text{H}_{0}:\bm{\vartheta}_{\mathbb{I}}^{*}\in\mathbb{T}_{\mathbb{I},0}\subset\mathbb{T}^{\left|\mathbb{I}\right|}$
that satisfy:
\begin{equation}
\sup_{\bm{\vartheta}_{\mathbb{I}}^{*}\in\mathbb{T}_{\mathbb{I},0}}\text{Pr}_{\bm{\vartheta}_{\mathbb{I}}^{*}}\left[P_{\mathbb{I}}\left(\mathbf{D}_{n}\right)\le\alpha\right]\le\alpha\text{,}\label{eq: good pvalue}
\end{equation}
where $\text{Pr}_{\bm{\theta}_{i}^{*}}$ and $\text{Pr}_{\bm{\vartheta}_{\mathbb{I}}^{*}}$
denote probability measures consistent with the PDF $f\left(\bm{x}_{i}|\bm{\theta}_{i}^{*}\right)$,
for each $i\in\left[n\right]$, and for all $i\in\mathbb{I}$, respectively. \textcolor{black}{That is, for a measurable set $\mathcal{A}\subset\mathbb{X}^{n}$, and
assuming absolute continuity of $\text{Pr}_{\bm{\vartheta}_{\mathbb{I}}^{*}}$
with respect to some measure $\mathfrak{m}$ (typically the Lebesgue
or counting measure), we can write
\begin{align}
\text{Pr}_{\bm{\vartheta}_{\mathbb{I}}^{*}}\left(\mathcal{A}\right) & =\int_{\mathcal{A}}\prod_{i\in\mathbb{I}}f\left(\bm{x}_{i}|\bm{\theta}_{i}^{*}\right)\prod_{j\notin\mathbb{I}}f\left(\bm{x}_{j}|\bm{\theta}_{j}\right)\text{d}\mathfrak{m}\left(\mathbf{d}_{n}\right)\text{,}\label{eq: Integral Definition}
\end{align}
where $\bm{\theta}_{j}$ is an arbitrary element of $\mathbb{T}$, for each $j\notin\mathbb{I}$.}

The setup above falls within the framework of empirical Bayesian (EB)
inference, as exposited in the volumes of \citet{Maritz:1989vp},
\citet{AhmedReid:2001ws}, \cite{Serdobolskii2008Multiparametric}, \citet{Efron2010}, and \citet{Bickel:2020wp}.
Over the years, there has been a sustained interest in the construction
and computation of EB point estimators for $\bm{\vartheta}_{n}^{*}$,
in various contexts, with many convenient and general computational tools now
made available, for instance, via the software of \citet{Johnstone:2005uq},
\citet{Leng:2013ug}, \citet{Koenker:2017tq}, and \citet{Narasimhan:2020ts}.
Unfortunately, the probabilistic properties of $\bm{\vartheta}_{n}^{*}$
tend to be difficult to characterize, making the construction of confidence
sets and hypothesis tests with good theoretical properties relatively less routine
than the construction of point estimators. When restricted to certain
classes of models, such constructions are nevertheless possible, as
exemplified by the works of \citet{Casella:1983ui}, \citet{Morris:1983uv},
\citet{Laird:1987wo}, \citet{Datta:2002wd}, \citet{Tai:2006wm},
\citet{Hwang:2009vr}, \citet{Hwang:2013vu}, and \citet{Yoshimori:2014vx},
among others.

In this work, we adapt the universal inference framework of \citet{Wasserman:2020aa}
to produce valid confidence sets and $p$-values with properties (\ref{eq: good confidence})
and (\ref{eq: good pvalue}), respectively, for arbitrary estimators
of $\bm{\vartheta}_{n}^{*}$. As with the constructions of \citet{Wasserman:2020aa},
the produced inferential methods  are all valid for finite sample size
$n$ and require no assumptions beyond correctness of model
specification. The confidence sets and $p$-values arise by construction
of holdout likelihood ratios that can be demonstrated to have the
$e$-value property, as described in \citet{Vovk:2021vi} (see also
the $s$-values of \citealp{Grunwald:2020vf} and the betting values
of \citealp{Shafer:2021vh}). Here, we are able to take into account
the hierarchical structure of the Bayesian specified model by using
the fact that parameterized $e$-values are closed when averaged with
respect to an appropriate probability measure (cf. \citealp{Vovk:2007aa}
and \citealp{Kaufmann2018Mixture-marting}). Due to the finite sample correctness of our constructions, we shall refer to our methods as finite sample EB (FSEB) techniques.

Along with our methodological developments, we also demonstrate the application of our FSEB techniques in numerical studies and real data applications. These applications include the use of FSEB methods for constructing confidence intervals (CIs) for the classic mean estimator of \cite{Stein1956Inadmissibility}, and testing and CI construction in Poisson--gamma models and Beta--binomial models, as per \cite{Koenker:2017tq} and \cite{Hardcastle13}, respectively.
% changed this reference as previous did not mention beta-binomial in their paper
Real data applications are demonstrated via the analysis of insurance data from \cite{Haastrup2000Comparison-of-s} and differential methylation data from \cite{Cruickshanks13}. In these real and synthetic applications, we show that FSEB methods, satisfying conditions (\ref{eq: good confidence}) and (\ref{eq: good pvalue}), are able to generate useful and meaningful inferential statements.

We proceed as follows. In Section \ref{sec:Confidence-sets-and},
we introduce the confidence set and $p$-value constructions for drawing
inference regarding EB models. In Section \ref{sec:Numerical-results},
numerical studies of simulated data are used to demonstrate the applicability and effectiveness of FSEB constructions. In Section \ref{sec:Results-applications}, FSEB methods are applied to real data to further show the practicality of the techniques. Lastly, in Section \ref{sec:Conclusion}, we
provide discussions and conclusions regarding our results.

\section{\label{sec:Confidence-sets-and}Confidence sets and hypothesis tests}

We retain the notation and setup from Section \ref{sec:Introduction}.
For each subset $\mathbb{I}\subset\left[n\right]$, let us write $\mathbf{D}_{\mathbb{I}}=\left(\bm{X}_{i}\right)_{i\in\mathbb{I}}$
and $\overline{\mathbf{D}}_{\mathbb{I}}=\left(\bm{X}_{i}\right)_{i\in\left[n\right]\backslash\mathbb{I}}$. 

Suppose that we have available some estimator of $\bm{\psi}$ that
only depends on $\overline{\mathbf{D}}_{\mathbb{I}}$ (and not $\mathbf{D}_{\mathbb{I}}$),
which we shall denote by $\hat{\bm{\psi}}_{\mathbb{I},n}$. Furthermore, for fixed $\bm{\psi}$, write the
integrated and unintegrated likelihood of the data $\mathbf{D}_{\mathbb{I}}$,
as
\begin{align}
\textcolor{black}{L_{\mathbb{I}}\left(\bm{\psi}\right)=\prod_{i\in\mathbb{I}}\int_{\mathbb{T}}f\left(\bm{X}_{i}|\bm{\theta}_{i}\right)\pi\left(\bm{\theta}_i;\bm{\psi}\right)\text{d}\mathfrak{n}(\bm{\theta}_{i}) \label{eq: Big L}}
\end{align}
and
\begin{align}
l_{\mathbb{I}}\left(\bm{\vartheta}_{\mathbb{I}}\right)=\prod_{i\in\mathbb{I}}f\left(\bm{X}_{i}|\bm{\theta}_{i}\right)\text{,} \label{eq: Little l}
\end{align}
respectively, where $\bm{\vartheta}_{\mathbb{I}}=\left(\bm{\theta}_{i}\right)_{i\in\mathbb{I}}$
(here, $\bm{\vartheta}_{\left\{ i\right\} }=\bm{\theta}_{i}$). \textcolor{black}{We note that in \eqref{eq: Big L}, we have assumed that $\pi(\cdot;\bm{\psi})$ is a density function with respect to some measure on $\mathbb{T}$, $\mathfrak{n}$.}

Define the ratio statistic:
\begin{align}
R_{\mathbb{I},n}\left(\bm{\vartheta}_{\mathbb{I}}\right)=L_{\mathbb{I}}\left(\hat{\bm{\psi}}_{\mathbb{I},n}\right)/l_{\mathbb{I}}\left(\bm{\vartheta}_{\mathbb{I}}\right)\text{,} \label{eq: Ratio}
\end{align}
and consider sets of the form 
\[
\mathcal{C}_{i}^{\alpha}\left(\mathbf{D}_{n}\right)=\left\{ \bm{\theta}\in\mathbb{T}:R_{i,n}\left(\bm{\theta}\right)\le1/\alpha\right\} \text{.}
\]
The following Lemma \textcolor{black}{is an adaptation of the main idea of \cite{Wasserman:2020aa} for the context of empierical Bayes estimators, and} allows us to show that $\mathcal{C}_{i}^{\alpha}\left(\mathbf{D}_{n}\right)$
satisfies property (\ref{eq: good confidence}).
\begin{lem}
\label{Lem: main}For each $\mathbb{I}\subset\left[n\right]$ and
fixed sequence $\bm{\vartheta}_{n}^{*}\in\mathbb{T}^{n}$, $\mathrm{E}_{\bm{\vartheta}_{\mathbb{I}}^{*}}\left[R_{\mathbb{I},n}\left(\bm{\vartheta}_{\mathbb{I}}^{*}\right)\right]=1$.
\end{lem}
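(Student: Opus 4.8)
The plan is to compute the expectation directly from the integral definition \eqref{eq: Integral Definition}, exploiting the fact that $\hat{\bm{\psi}}_{\mathbb{I},n}$ depends only on $\overline{\mathbf{D}}_{\mathbb{I}}$ and is therefore independent of $\mathbf{D}_{\mathbb{I}}$ under the product structure. First I would write
\[
\mathrm{E}_{\bm{\vartheta}_{\mathbb{I}}^{*}}\left[R_{\mathbb{I},n}\left(\bm{\vartheta}_{\mathbb{I}}^{*}\right)\right]=\mathrm{E}_{\bm{\vartheta}_{\mathbb{I}}^{*}}\left[\frac{L_{\mathbb{I}}\left(\hat{\bm{\psi}}_{\mathbb{I},n}\right)}{l_{\mathbb{I}}\left(\bm{\vartheta}_{\mathbb{I}}^{*}\right)}\right],
\]
and condition on $\overline{\mathbf{D}}_{\mathbb{I}}$ (equivalently, on $\hat{\bm{\psi}}_{\mathbb{I},n}$), so that the numerator becomes a constant inside the inner expectation. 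It then remains to show that for any fixed value $\bm{\psi}$ of the estimator, $\mathrm{E}_{\bm{\vartheta}_{\mathbb{I}}^{*}}\left[L_{\mathbb{I}}\left(\bm{\psi}\right)/l_{\mathbb{I}}\left(\bm{\vartheta}_{\mathbb{I}}^{*}\right)\right]=1$, where the expectation is taken over $\mathbf{D}_{\mathbb{I}}$ with each $\bm{X}_{i}$ distributed according to $f\left(\cdot\,|\bm{\theta}_{i}^{*}\right)$.

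The second step is the core computation. Using the product forms \eqref{eq: Big L} and \eqref{eq: Little l}, the ratio factorizes over $i\in\mathbb{I}$, so by independence it suffices to treat a single index:
\[
\mathrm{E}_{\bm{\theta}_{i}^{*}}\left[\frac{\int_{\mathbb{T}}f\left(\bm{X}_{i}|\bm{\theta}_{i}\right)\pi\left(\bm{\theta}_{i};\bm{\psi}\right)\mathrm{d}\mathfrak{n}\left(\bm{\theta}_{i}\right)}{f\left(\bm{X}_{i}|\bm{\theta}_{i}^{*}\right)}\right]=\int_{\mathbb{X}}\left(\int_{\mathbb{T}}f\left(\bm{x}_{i}|\bm{\theta}_{i}\right)\pi\left(\bm{\theta}_{i};\bm{\psi}\right)\mathrm{d}\mathfrak{n}\left(\bm{\theta}_{i}\right)\right)\mathrm{d}\mathfrak{m}\left(\bm{x}_{i}\right),
\]
since the $f\left(\bm{x}_{i}|\bm{\theta}_{i}^{*}\right)$ in the denominator cancels against the density in the expectation integral. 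An application of Tonelli's theorem (the integrand is nonnegative) swaps the order of integration, and then $\int_{\mathbb{X}}f\left(\bm{x}_{i}|\bm{\theta}_{i}\right)\mathrm{d}\mathfrak{m}\left(\bm{x}_{i}\right)=1$ for each $\bm{\theta}_{i}$, leaving $\int_{\mathbb{T}}\pi\left(\bm{\theta}_{i};\bm{\psi}\right)\mathrm{d}\mathfrak{n}\left(\bm{\theta}_{i}\right)=1$. Multiplying the per-index values gives $1$ for the conditional expectation, and then taking the outer expectation over $\overline{\mathbf{D}}_{\mathbb{I}}$ of the constant $1$ yields the claim via the tower property.

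I would organize the write-up as: (i) invoke independence of $\mathbf{D}_{\mathbb{I}}$ and $\overline{\mathbf{D}}_{\mathbb{I}}$ to justify conditioning on $\hat{\bm{\psi}}_{\mathbb{I},n}$; (ii) perform the single-index cancellation and Tonelli swap; (iii) assemble via the product structure and the tower property. The main obstacle is a matter of care rather than depth: one must be explicit that $\hat{\bm{\psi}}_{\mathbb{I},n}$ is a measurable function of $\overline{\mathbf{D}}_{\mathbb{I}}$ only, so that it behaves as a constant under $\mathrm{E}_{\bm{\vartheta}_{\mathbb{I}}^{*}}$ conditionally on $\overline{\mathbf{D}}_{\mathbb{I}}$, and one should note that the identity holds for \emph{every} realized value of the estimator, so no distributional assumption on $\hat{\bm{\psi}}_{\mathbb{I},n}$ is needed. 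A minor technical point worth flagging is the implicit assumption that $L_{\mathbb{I}}\left(\bm{\psi}\right)$ is finite and that the relevant densities are genuine densities with respect to $\mathfrak{m}$ and $\mathfrak{n}$, which licenses the Tonelli interchange; these are already built into the setup in \eqref{eq: Big L} and \eqref{eq: Integral Definition}.
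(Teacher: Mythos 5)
Your proposal is correct and follows essentially the same route as the paper: the paper also writes the expectation as an iterated integral with the inner integral over $\mathbf{D}_{\mathbb{I}}$ (which is the integral form of your conditioning on $\overline{\mathbf{D}}_{\mathbb{I}}$), cancels $l_{\mathbb{I}}\left(\bm{\vartheta}_{\mathbb{I}}^{*}\right)$ against the sampling density, and uses the fact that $L_{\mathbb{I}}\left(\bm{\psi}\right)$ integrates to one over $\mathbb{X}^{\left|\mathbb{I}\right|}$ for any fixed value of the holdout estimator. Your per-index factorization with Tonelli merely makes explicit the step the paper states as ``$L_{\mathbb{I}}$ is a probability density function on $\mathbb{X}^{\left|\mathbb{I}\right|}$,'' so there is no substantive difference.
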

\begin{proof}
\textcolor{black}{
Let $\mathbf{d}_{\mathbb{I}}$ and $\bar{\mathbf{d}}_{\mathbb{I}}$
be realizations of $\mathbf{D}_{\mathbb{I}}$ and $\overline{\mathbf{D}}_{\mathbb{I}}$,
respectively. Then, using (\ref{eq: Integral Definition}), write
\begin{align*}
\text{E}_{\bm{\theta}_{\mathbb{I}}^{*}}\left[R_{\mathbb{I},n}\left(\bm{\vartheta}_{\mathbb{I}}^{*}\right)\right] & =\int_{\mathbb{X}^{n}}R_{\mathbb{I},n}\left(\bm{\vartheta}_{\mathbb{I}}^{*}\right)\prod_{i\in\mathbb{I}}f\left(\bm{x}_{i}|\bm{\theta}_{i}^{*}\right)\prod_{j\notin\mathbb{I}}f\left(\bm{x}_{j}|\bm{\theta}_{j}\right)\text{d}\mathfrak{m}\left(\mathbf{d}_{n}\right)\\
 & \underset{\text{(i)}}{=}\int_{\mathbb{X}^{n-\left|\mathbb{I}\right|}}\int_{\mathbb{X}^{\left|\mathbb{I}\right|}}\frac{L_{\mathbb{I}}\left(\hat{\bm{\psi}}_{\mathbb{I},n}\right)}{l_{\mathbb{I}}\left(\bm{\vartheta}_{\mathbb{I}}^{*}\right)}\prod_{i\in\mathbb{I}}f\left(\bm{x}_{i}|\bm{\theta}_{i}^{*}\right)\text{d}\mathfrak{m}\left(\mathbf{d}_{\mathbb{I}}\right)\prod_{j\notin\mathbb{I}}f\left(\bm{x}_{j}|\bm{\theta}_{j}\right)\text{d}\mathfrak{m}\left(\bar{\mathbf{d}}_{\mathbb{I}}\right)\\
 & \underset{\text{(ii)}}{=}\int_{\mathbb{X}^{n-\left|\mathbb{I}\right|}}\int_{\mathbb{X}^{\left|\mathbb{I}\right|}}L_{\mathbb{I}}\left(\hat{\bm{\psi}}_{\mathbb{I},n}\right)\text{d}\mathfrak{m}\left(\mathbf{d}_{\mathbb{I}}\right)\prod_{j\notin\mathbb{I}}f\left(\bm{x}_{j}|\bm{\theta}_{j}\right)\text{d}\mathfrak{m}\left(\bar{\mathbf{d}}_{\mathbb{I}}\right)\\
 & \underset{\text{(iii)}}{=}\int_{\mathbb{X}^{n-\left|\mathbb{I}\right|}}\prod_{j\notin\mathbb{I}}f\left(\bm{x}_{j}|\bm{\theta}_{j}\right)\text{d}\mathfrak{m}\left(\bar{\mathbf{d}}_{\mathbb{I}}\right)\\
 & \underset{\text{(iv)}}{=}1\text{.}
\end{align*}
Here, (i) is true by definition of \eqref{eq: Ratio}, (ii) is true by definition of \eqref{eq: Little l}, (iii) is true by the fact that \eqref{eq: Big L} is a probability density function
on $\mathbb{X}^{\left|\mathbb{I}\right|}$, with respect to $\mathfrak{m}$,
and (iv) is true by the fact that $\prod_{j\notin\mathbb{I}}f\left(\bm{x}_{j}|\bm{\theta}_{j}\right)$
is a probability density function on $\mathbb{X}^{n-\left|\mathbb{I}\right|}$,
with respect to $\mathfrak{m}$.
}
\end{proof}
\begin{prop}
\label{Prop: Confidence}For each $i\in\left[n\right]$, $\mathcal{C}_{i}^{\alpha}\left(\mathbf{D}_{n}\right)$
is a $100\left(1-\alpha\right)\%$ confidence set, in the sense that
\[
\mathrm{Pr}_{\bm{\theta}_{i}^{*}}\left[\bm{\theta}_{i}^{*}\in\mathcal{C}_{i}^{\alpha}\left(\mathbf{D}_{n}\right)\right]\ge1-\alpha\text{.}
\]
\end{prop}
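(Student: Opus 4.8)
The plan is to derive Proposition \ref{Prop: Confidence} directly from Lemma \ref{Lem: main} by a Markov-inequality argument applied to the nonnegative statistic $R_{i,n}\left(\bm{\theta}_{i}^{*}\right)$. First I would observe that, by definition of the set $\mathcal{C}_{i}^{\alpha}\left(\mathbf{D}_{n}\right)$, the complementary event $\left\{ \bm{\theta}_{i}^{*}\notin\mathcal{C}_{i}^{\alpha}\left(\mathbf{D}_{n}\right)\right\}$ is exactly the event $\left\{ R_{i,n}\left(\bm{\theta}_{i}^{*}\right)>1/\alpha\right\}$. Hence it suffices to show $\mathrm{Pr}_{\bm{\theta}_{i}^{*}}\left[R_{i,n}\left(\bm{\theta}_{i}^{*}\right)>1/\alpha\right]\le\alpha$.

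Next I would note that $R_{i,n}\left(\bm{\theta}_{i}^{*}\right)$ is a nonnegative random variable, since it is a ratio of the nonnegative integrated likelihood $L_{i}\left(\hat{\bm{\psi}}_{i,n}\right)$ and the nonnegative unintegrated likelihood $l_{i}\left(\bm{\theta}_{i}^{*}\right)$; and by Lemma \ref{Lem: main} applied with $\mathbb{I}=i$, its expectation under $\mathrm{Pr}_{\bm{\theta}_{i}^{*}}$ equals $1$. Markov's inequality then gives
\[
\mathrm{Pr}_{\bm{\theta}_{i}^{*}}\left[R_{i,n}\left(\bm{\theta}_{i}^{*}\right)>1/\alpha\right]\le\frac{\mathrm{E}_{\bm{\theta}_{i}^{*}}\left[R_{i,n}\left(\bm{\theta}_{i}^{*}\right)\right]}{1/\alpha}=\alpha\text{.}
\]
Taking complements yields $\mathrm{Pr}_{\bm{\theta}_{i}^{*}}\left[\bm{\theta}_{i}^{*}\in\mathcal{C}_{i}^{\alpha}\left(\mathbf{D}_{n}\right)\right]\ge1-\alpha$, which is the claim.

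There is essentially no serious obstacle here: the substantive work has already been done in establishing the unit-expectation (i.e.\ $e$-value) property of $R_{i,n}$ in Lemma \ref{Lem: main}, and the proposition is just the standard conversion of an $e$-value into a confidence set via Markov's inequality. The only minor point to be careful about is the direction of the inequality defining $\mathcal{C}_{i}^{\alpha}$ (the set uses $\le 1/\alpha$, so its complement is the strict inequality $>1/\alpha$, which is exactly the event Markov's inequality bounds), and the implicit use of the fact that the bound $\mathrm{E}_{\bm{\theta}_{i}^{*}}\left[R_{i,n}\left(\bm{\theta}_{i}^{*}\right)\right]=1$ holds for every fixed value of the nuisance coordinates $\left(\bm{\theta}_{j}^{*}\right)_{j\neq i}$, which is guaranteed by Lemma \ref{Lem: main} since that lemma is stated for an arbitrary fixed sequence $\bm{\vartheta}_{n}^{*}$.
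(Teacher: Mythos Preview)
Your proposal is correct and follows essentially the same route as the paper's proof: apply Markov's inequality to the nonnegative ratio $R_{i,n}(\bm{\theta}_i^*)$, invoke Lemma~\ref{Lem: main} for the unit-expectation property, and pass to the complement. If anything, you are slightly more careful than the paper in distinguishing the strict complement $\{R_{i,n}(\bm{\theta}_i^*)>1/\alpha\}$ from the non-strict event; the paper glosses over this, but either version works.
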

\begin{proof}
For each $i$, Markov's inequality states that
\[
\mathrm{Pr}_{\bm{\theta}_{i}^{*}}\left[R_{i,n}\left(\bm{\theta}_{i}^{*}\right)\ge1/\alpha\right]\le\alpha\text{E}_{\bm{\theta}_{i}^{*}}\left[R_{i,n}\left(\bm{\theta}_{i}^{*}\right)\right]=\alpha\text{,}
\]
which implies that
\[
\mathrm{Pr}_{\bm{\theta}_{i}^{*}}\left[\bm{\theta}_{i}^{*}\in\mathcal{C}_{i}^{\alpha}\left(\mathbf{D}_{n}\right)\right]=\mathrm{Pr}_{\bm{\theta}_{i}^{*}}\left[R_{i,n}\left(\bm{\theta}_{i}^{*}\right)\le1/\alpha\right]\ge1-\alpha
\]
by Lemma \ref{Lem: main}.
\end{proof}
Next, we consider the testing of null hypotheses $\text{H}_{0}\text{: }\bm{\vartheta}_{\mathbb{I}}^{*}\in\mathbb{T}_{\mathbb{I},0}$
against an arbitrary alternative $\text{H}_{1}\text{: }\bm{\vartheta}_{\mathbb{I}}^{*}\in\mathbb{T}_{\mathbb{I},1}\subseteq\mathbb{T}^{\left|\mathbb{I}\right|}$.
To this end, we define the maximum unintegrated likelihood estimator
of $\bm{\vartheta}_{\mathbb{I}}^{*}$, under $\text{H}_{0}$ as
\begin{equation}
\tilde{\bm{\vartheta}}_{\mathbb{I}}\in\left\{ \tilde{\bm{\vartheta}}_{\mathbb{I}}\in\mathbb{T}_{\mathbb{I},0}:l_{\mathbb{I}}\left(\tilde{\bm{\vartheta}}_{\mathbb{I}}\right)=\sup_{\bm{\vartheta}_{\mathbb{I}}\in\mathbb{T}_{\mathbb{I},0}}l_{\mathbb{I}}\left(\bm{\vartheta}_{\mathbb{I}}\right)\right\} \text{.}\label{eq: MLE}
\end{equation}

Using (\ref{eq: MLE}), and again letting $\hat{\bm{\psi}}_{\mathbb{I},n}$
be an arbitrary estimator of $\bm{\psi}$, depending only on $\overline{\mathbf{D}}_{\mathbb{I}}$,
we define the ratio test statistic
\[
T_{\mathbb{I}}\left(\mathbf{D}_{n}\right)=L_{\mathbb{I}}\left(\hat{\bm{\psi}}_{\mathbb{I},n}\right)/l_{\mathbb{I}}\left(\tilde{\bm{\vartheta}}_{\mathbb{I}}\right)\text{.}
\]
The following result establishes the fact that the $p$-value $P_{\mathbb{I}}\left(\mathbf{D}_{n}\right)=1/T_{\mathbb{I}}\left(\mathbf{D}_{n}\right)$
has the correct size, under $\text{H}_{0}$.
\begin{prop} \label{prop: test}
For any $\alpha\in\left(0,1\right)$ and $\bm{\vartheta}_{\mathbb{I}}^{*}\in\mathbb{T}_{\mathbb{I},0}$,
$\Pr_{\bm{\vartheta}_{\mathbb{I}}^{*}}\left[P_{\mathbb{I}}\left(\mathbf{D}_{n}\right)\le\alpha\right]\le\alpha$.
\end{prop}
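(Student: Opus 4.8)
The plan is to bound $P_{\mathbb{I}}(\mathbf{D}_n) = 1/T_{\mathbb{I}}(\mathbf{D}_n)$ by reducing to the expectation computation already established in Lemma \ref{Lem: main}. Observe that $P_{\mathbb{I}}(\mathbf{D}_n) \le \alpha$ is equivalent to $T_{\mathbb{I}}(\mathbf{D}_n) \ge 1/\alpha$, i.e.
\[
L_{\mathbb{I}}\bigl(\hat{\bm{\psi}}_{\mathbb{I},n}\bigr) \ge \frac{1}{\alpha}\, l_{\mathbb{I}}\bigl(\tilde{\bm{\vartheta}}_{\mathbb{I}}\bigr).
\]
First I would fix $\bm{\vartheta}_{\mathbb{I}}^{*} \in \mathbb{T}_{\mathbb{I},0}$ and use the defining property of the constrained maximizer in \eqref{eq: MLE}: since $\bm{\vartheta}_{\mathbb{I}}^{*} \in \mathbb{T}_{\mathbb{I},0}$, we have $l_{\mathbb{I}}(\tilde{\bm{\vartheta}}_{\mathbb{I}}) \ge l_{\mathbb{I}}(\bm{\vartheta}_{\mathbb{I}}^{*})$. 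Consequently
\[
T_{\mathbb{I}}\left(\mathbf{D}_{n}\right) = \frac{L_{\mathbb{I}}(\hat{\bm{\psi}}_{\mathbb{I},n})}{l_{\mathbb{I}}(\tilde{\bm{\vartheta}}_{\mathbb{I}})} \le \frac{L_{\mathbb{I}}(\hat{\bm{\psi}}_{\mathbb{I},n})}{l_{\mathbb{I}}(\bm{\vartheta}_{\mathbb{I}}^{*})} = R_{\mathbb{I},n}\left(\bm{\vartheta}_{\mathbb{I}}^{*}\right),
\]
so $T_{\mathbb{I}}(\mathbf{D}_n)$ is dominated pointwise by the ratio statistic evaluated at the true parameter.

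Next I would apply Markov's inequality to the nonnegative random variable $R_{\mathbb{I},n}(\bm{\vartheta}_{\mathbb{I}}^{*})$ under $\mathrm{Pr}_{\bm{\vartheta}_{\mathbb{I}}^{*}}$, exactly as in the proof of Proposition \ref{Prop: Confidence}, together with Lemma \ref{Lem: main} which gives $\mathrm{E}_{\bm{\vartheta}_{\mathbb{I}}^{*}}[R_{\mathbb{I},n}(\bm{\vartheta}_{\mathbb{I}}^{*})] = 1$. This yields
\[
\mathrm{Pr}_{\bm{\vartheta}_{\mathbb{I}}^{*}}\left[P_{\mathbb{I}}\left(\mathbf{D}_{n}\right) \le \alpha\right] = \mathrm{Pr}_{\bm{\vartheta}_{\mathbb{I}}^{*}}\left[T_{\mathbb{I}}\left(\mathbf{D}_{n}\right) \ge 1/\alpha\right] \le \mathrm{Pr}_{\bm{\vartheta}_{\mathbb{I}}^{*}}\left[R_{\mathbb{I},n}\left(\bm{\vartheta}_{\mathbb{I}}^{*}\right) \ge 1/\alpha\right] \le \alpha\, \mathrm{E}_{\bm{\vartheta}_{\mathbb{I}}^{*}}\left[R_{\mathbb{I},n}\left(\bm{\vartheta}_{\mathbb{I}}^{*}\right)\right] = \alpha,
\]
which is the claim. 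The argument is essentially a two-line reduction, so there is no substantial obstacle; the only point requiring care is the measurability and well-definedness of $\tilde{\bm{\vartheta}}_{\mathbb{I}}$ (so that $T_{\mathbb{I}}$ is a genuine statistic) and ensuring the domination step holds for the specific realization — both of which follow from the assumed existence of the supremum in \eqref{eq: MLE} and the fact that the inequality $l_{\mathbb{I}}(\tilde{\bm{\vartheta}}_{\mathbb{I}}) \ge l_{\mathbb{I}}(\bm{\vartheta}_{\mathbb{I}}^{*})$ holds surely. One should also note that the bound is uniform over $\bm{\vartheta}_{\mathbb{I}}^{*} \in \mathbb{T}_{\mathbb{I},0}$ since the chain of inequalities did not depend on which null value was fixed, which gives the supremum form of \eqref{eq: good pvalue} for free.
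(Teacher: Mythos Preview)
Your proposal is correct and follows essentially the same approach as the paper: both use the inequality $l_{\mathbb{I}}(\tilde{\bm{\vartheta}}_{\mathbb{I}}) \ge l_{\mathbb{I}}(\bm{\vartheta}_{\mathbb{I}}^{*})$ from \eqref{eq: MLE} together with Markov's inequality and Lemma~\ref{Lem: main}. The only cosmetic difference is that the paper applies Markov to $T_{\mathbb{I}}(\mathbf{D}_n)$ first and then bounds the resulting expectation, whereas you first dominate $T_{\mathbb{I}}(\mathbf{D}_n)$ pointwise by $R_{\mathbb{I},n}(\bm{\vartheta}_{\mathbb{I}}^{*})$ and then apply Markov; the two orderings are equivalent.
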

\begin{proof}
Assume that $\bm{\vartheta}_{\mathbb{I}}^{*}\in\mathbb{T}_{\mathbb{I},0}$.
By Markov's inequality, we have
\begin{align*}
\mathrm{Pr}_{\bm{\vartheta}_{\mathbb{I}}^{*}}\left[T_{\mathbb{I}}\left(\mathbf{D}_{n}\right)\ge1/\alpha\right] & \le\alpha\text{E}_{\bm{\vartheta}_{\mathbb{I}}^{*}}\left[T_{\mathbb{I}}\left(\mathbf{D}_{n}\right)\right]\\
 & =\alpha\text{E}_{\bm{\vartheta}_{\mathbb{I}}^{*}}\left[\frac{L_{\mathbb{I}}\left(\hat{\bm{\psi}}_{\mathbb{I},n}\right)}{l_{\mathbb{I}}\left(\tilde{\bm{\vartheta}}_{\mathbb{I}}\right)}\right]\underset{\text{(i)}}{\le}\alpha\text{E}_{\bm{\vartheta}_{\mathbb{I}}^{*}}\left[\frac{L_{\mathbb{I}}\left(\hat{\bm{\psi}}_{\mathbb{I},n}\right)}{l_{\mathbb{I}}\left(\bm{\vartheta}_{\mathbb{I}}^{*}\right)}\right]\underset{\text{(ii)}}{=}\alpha\text{,}
\end{align*}
where the (i) is true due to the fact that $l_{\mathbb{I}}\left(\tilde{\bm{\vartheta}}_{\mathbb{I}}\right)\ge l_{\mathbb{I}}\left(\bm{\vartheta}_{\mathbb{I}}^{*}\right)$,
by the definition of (\ref{eq: MLE}), and the (ii) is true
due to Lemma \ref{Lem: main}.
\end{proof}

\textcolor{black}{We note that Propositions \ref{Prop: Confidence} and \ref{prop: test} are empirical Bayes analogues of Theorems 1 and 2 from \cite{Wasserman:2020aa}, which provide guarantees for universal inference confidence set and hypothesis test constructions, respectively. Furthermore, the use of Lemma \ref{Lem: main} in the proofs also imply that the CIs constructed via Proposition \ref{Prop: Confidence} are $e$-CIs, as defined by \cite{xu2022post}, and the $p$-values obtained via Proposition \ref{prop: test} can be said to be $e$-value calibrated, as per the definitions of \cite{wang2020false}.}

\section{\label{sec:Numerical-results} FSEB examples and some numerical results}
% renamed this section to highlight this is not just numerical studies but further 
% working out examples in specific situations- but please feel free to adapt (MG)

To demonstrate the usefulness of the FSEB results from Section \ref{sec:Confidence-sets-and}, we shall present a number of synthetic and real world applications of the confidence and testing constructions. All of the computation is conducted in the \textsf{R} programming environment (R Core Team, 2020) and replicable scripts are made available at \url{https://github.com/hiendn/Universal_EB}. Where unspecified, numerical optimization is conducted using the \texttt{optim()} or \texttt{optimize()} functions in the case of multivariate and univariate optimization, respectively.

\subsection{Stein's problem\label{subsec:Stein's-problem}}

We begin by studying the estimation of normal means, as originally
considered in \citet{Stein1956Inadmissibility}. Here, we largely
follow the exposition of \citet[Ch. 1]{Efron2010} and note that the estimator falls within the shrinkage paradigm exposited in \cite{Serdobolskii2008Multiparametric}. We consider this
setting due to its simplicity and the availability of a simple EB-based
method to compare our methodology against.

Let $\left(\left(X_{i},\Theta_{i}\right)\right)_{i\in\left[n\right]}$
be IID and for each $i\in\left[n\right]$, $\Theta_{i}\sim\text{N}\left(0,\psi^{2}\right)$
($\psi^{2}>0$) and $\left[X_{i}|\Theta_{i}=\theta_{i}\right]\sim\text{N}\left(\theta_{i},1\right)$,
where $\text{N}\left(\mu,\sigma^{2}\right)$ is the normal law with
mean $\mu\in\mathbb{R}$ and variance $\sigma^{2}>0$. We assume that
$\psi^{2}$ is unknown and that we observe data $\mathbf{D}_{n}$
and wish to construct CIs for the realizations $\theta_{n}^{*}$,
which characterize the DGP of the observations $X_{n}$.

Following \citet[Sec. 1.5]{Efron2010}, when $\psi^{2}$ is known,
the posterior distribution of $\left[\Theta_{n}|X_{n}=x_{n}\right]$ is $\text{N}\left(g\left(\psi^{2}\right)x_{n},g\left(\psi^{2}\right)\right)$,
where $g\left(\psi^{2}\right)=\psi^{2}/\left(1+\psi^{2}\right)$.
Using the data $\mathbf{D}_{n}$, we have the fact that $\sum_{i=1}^{n-1}X_{i}^{2}\sim\left(\psi^{2}+1\right)\chi_{n-1}^{2}$,
where $\chi_{\nu}^{2}$ is the chi-squared distribution with $\nu$
degrees of freedom. This implies a method-of-moment estimator for
$g$ of the form: $\bar{g}_{n}=1-\left(n-2\right)/\sum_{i=1}^{n}X_{i}^{2}$,
in the case of unknown $\psi^{2}$.

We can simply approximate the distribution of $\left[\Theta_{n}|\mathbf{D}_{n}\right]$
as $\text{N}\left(\bar{g}_{n} X_{n},\bar{g}_{n}\right)$, although this approximation
ignores the variability of $\bar{g}_{n}$. As noted by \citet[Sec. 1.5]{Efron2010},
via a hierarchical Bayesian interpretation using an objective Bayesian prior, we may instead
deduce the more accurate approximate distribution:
\begin{align}
\text{N}\left(\bar{g}_{n} X_{n},\bar{g}_{n}+2\left[X_{n}\left(1-\bar{g}_{n}\right)^{2}\right]/\left[n-2\right]\right). \label{eq: Morris approx}
\end{align}
\textcolor{black}{Specifically, \cite{Efron2010} considers the hyperparameter $\psi^{2}$
as being a random variable, say $\Psi^{2}$, and places a so-called
objective (or non-informative) prior on $\Psi^{2}$. In particular,
the improper prior assumption that $\Psi^{2}+1\sim\text{Uniform}\left(0,\infty\right)$
is made. Then, it follows from careful derivation that
\[
\text{E}\left[\Theta_{n}|\mathbf{D}_{n}\right]=\bar{g}_{n}X_{n}\text{ and }\text{var}\left[\Theta_{n}|\mathbf{D}_{n}\right]=\bar{g}_{n}+\frac{2X_{n}\left(1-\bar{g}_{n}\right)^{2}}{n-2}\text{,}
\]
and thus we obtain \eqref{eq: Morris approx} via a normal approximation for the distribution
of $\left[\Theta_{n}|\mathbf{D}_{n}\right]$ (cf. \citealt[Sec. 4]{morris1983parametric}).}

The approximation then provides $100\left(1-\alpha\right)\%$ posterior credible
intervals for $\Theta_{n}$ of the form
\begin{equation}
\bar{g}_n X_{n}\pm\zeta_{1-\alpha/2}\sqrt{\bar{g}_{n}+\frac{2\left[X_{n}\left(1-\bar{g}_{n}\right)^{2}\right]}{n-2}}\text{,}\label{eq: Morris Efron CI}
\end{equation}
where $\zeta_{1-\alpha/2}$ is the $\left(1-\alpha/2\right)$ quantile
of the standard normal distribution. This posterior result can then
be taken as an approximate $100\left(1-\alpha\right)\%$ confidence
interval for $\theta_{n}^{*}$.

Now, we wish to apply the FSEB results from Section \ref{sec:Confidence-sets-and}.
Here, $\mathbb{I}=\left\{ n\right\} $, and from the setup of the
problem, we have
\[
f\left(x_{n}|\theta_{n}\right)=\phi\left(x_{n};\theta_{n},1\right)\text{ and }\pi\left(\theta_{n};\psi\right)=\phi\left(\theta_{n};0,\psi^{2}\right)\text{,}
\]
where $\phi\left(x;\mu,\sigma^{2}\right)$ is the normal PDF with
mean $\mu$ and variance $\sigma^{2}$. Thus, 
\[
L_{\mathbb{I}}\left(\psi\right)=\int_{\mathbb{R}}\phi\left(X_{n};\theta,1\right)\phi\left(\theta;0,\psi^{2}\right)\text{d}\theta=\phi\left(X_{n};0,1+\psi^{2}\right)
\]
and $l_{\mathbb{I}}\left(\theta_{n}\right)=\phi\left(x_{n};\theta_{n},1\right)$,
which yields a ratio statistic of the form
\begin{align*}
R_{\mathbb{I},n}\left(\theta_{n}\right) & =L_{\mathbb{I}}\left(\psi_{-n}\right)/l_{\mathbb{I}}\left(\theta_{n}\right)\\
 & =\phi\left(X_{n};0,1+\hat{\psi}_{-n}^{2}\right)/\phi\left(X_{n};\theta_{n},1\right)\text{,}
\end{align*}
when combined with an appropriate estimator $\hat{\psi}_{-n}^{2}$
for $\psi^{2}$, using only $\bar{\mathbf{D}}_{\mathbb{I},n}=\mathbf{D}_{n-1}$.
We can obtain the region $\mathcal{C}_{\mathbb{I}}^{\alpha}\left(\mathbf{D}_{n}\right)$
by solving $R_{\mathbb{I},n}\left(\theta_{n}\right)\le1/\alpha$
to obtain:
\[
\left(X_{n}-\theta\right)^{2}\le2\log\left(1/\alpha\right)+2\log\left(1+\hat{\psi}_{-n}^{2}\right)+\frac{X_{n}^{2}}{\left(1+\hat{\psi}_{-n}^{2}\right)}\text{,}
\]
which, by Proposition \ref{Prop: Confidence}, yields the $100\left(1-\alpha\right)\%$
CI for $\theta_{n}^{*}$:
\begin{equation}
X_{n}\pm\sqrt{2\log\left(1/\alpha\right)+2\log\left(1+\hat{\psi}_{-n}^{2}\right)+\frac{X_{n}^{2}}{\left(1+\hat{\psi}_{-n}^{2}\right)}}\text{.}\label{eq: Wasserman Efron CI}
\end{equation}

We shall consider implementations of the CI of form (\ref{eq: Wasserman Efron CI})
using the estimator 
\[
\hat{\psi}_{-n}^{2}=\max\left\{ 0,s_{-n}^{2}-1\right\} \text{,}
\]
where $s_{-n}^{2}$ is the sample variance of the $\bar{\mathbf{D}}_{\mathbb{I},n}$,
and $s_{-n}^{2}-1$ is the method of moment estimator of $\psi^{2}$.
The maximum operator stops the estimator from becoming negative and
causes no problems in the computation of (\ref{eq: Wasserman Efron CI}).

We now compare the performances of the CIs of forms (\ref{eq: Morris Efron CI})
and (\ref{eq: Wasserman Efron CI}). To do so, we shall consider data
sets of sizes $n\in\left\{ 10,100,1000\right\} $, $\psi^{2}\in\left\{ 1^{2},5^{2},10^{2}\right\} $,
and $\alpha\in\left\{ 0.05,0.005,0.0005\right\} $. For each triplet
$\left(n,\psi^{2},\alpha\right)$, we repeat the computation of (\ref{eq: Morris Efron CI})
and (\ref{eq: Wasserman Efron CI}) $1000$ times and record the coverage
probability and average relative widths of the intervals (computed
as the width of (\ref{eq: Wasserman Efron CI}) divided by that of (\ref{eq: Morris Efron CI})).
The results of our experiment are presented in Table \ref{tab:Stein's-problem-simulation}.

\begin{table}
\caption{\label{tab:Stein's-problem-simulation}Stein's problem simulation
results reported as average performances over $1000$ replications. }

\begin{centering}
\begin{tabular}{llllll}
\hline 
$n$ & $\psi^{2}$ & $\alpha$ & Coverage of (\ref{eq: Morris Efron CI}) & Coverage of (\ref{eq: Wasserman Efron CI}) & Relative Width\tabularnewline
\hline 
\hline 
10 & $1^{2}$ & 0.05 & 0.948$^{*}$ & 1.000$^{*}$ & 1.979$^{*}$\tabularnewline
 &  & 0.005 & 0.988$^{*}$ & 1.000$^{*}$ & 1.738$^{*}$\tabularnewline
 &  & 0.0005 & 0.993$^{*}$ & 1.000$^{*}$ & 1.641$^{*}$\tabularnewline
 & $5^{2}$ & 0.05 & 0.943 & 1.000 & 1.902\tabularnewline
 &  & 0.005 & 0.994 & 1.000 & 1.543\tabularnewline
 &  & 0.0005 & 0.999 & 1.000 & 1.388\tabularnewline
 & $10^{2}$ & 0.05 & 0.947 & 1.000 & 2.058\tabularnewline
 &  & 0.005 & 0.994 & 1.000 & 1.633\tabularnewline
 &  & 0.0005 & 0.999 & 1.000 & 1.455\tabularnewline
\hline 
100 & $1^{2}$ & 0.05 & 0.937 & 0.999 & 2.068\tabularnewline
 &  & 0.005 & 0.997 & 1.000 & 1.806\tabularnewline
 &  & 0.0005 & 1.000 & 1.000 & 1.697\tabularnewline
 & $5^{2}$ & 0.05 & 0.949 & 1.000 & 1.912\tabularnewline
 &  & 0.005 & 0.995 & 1.000 & 1.540\tabularnewline
 &  & 0.0005 & 1.000 & 1.000 & 1.395\tabularnewline
 & $10^{2}$ & 0.05 & 0.947 & 1.000 & 2.068\tabularnewline
 &  & 0.005 & 0.995 & 1.000 & 1.635\tabularnewline
 &  & 0.0005 & 0.999 & 1.000 & 1.455\tabularnewline
\hline 
1000 & $1^{2}$ & 0.05 & 0.949 & 0.999 & 2.087\tabularnewline
 &  & 0.005 & 0.991 & 1.000 & 1.815\tabularnewline
 &  & 0.0005 & 1.000 & 1.000 & 1.705\tabularnewline
 & $5^{2}$ & 0.05 & 0.963 & 1.000 & 1.910\tabularnewline
 &  & 0.005 & 0.997 & 1.000 & 1.544\tabularnewline
 &  & 0.0005 & 1.000 & 1.000 & 1.399\tabularnewline
 & $10^{2}$ & 0.05 & 0.942 & 1.000 & 2.066\tabularnewline
 &  & 0.005 & 0.995 & 1.000 & 1.632\tabularnewline
 &  & 0.0005 & 0.999 & 1.000 & 1.455\tabularnewline
\hline 
\end{tabular}
\par\end{centering}
$^{*}$The results on these lines are computed from 968, 967, and
969 replicates, respectively, from top to bottom. This was due to the negative estimates
of the standard error in the computation of (\ref{eq: Morris Efron CI}). 
\end{table}

From Table \ref{tab:Stein's-problem-simulation}, we observe that
the CIs of form (\ref{eq: Morris Efron CI}) tended to produce intervals
with the desired levels of coverage, whereas the FSEB CIs of form (\ref{eq: Wasserman Efron CI})
tended to be conservative and contained the parameter of interest
in almost all replications. The price that is paid for this conservativeness
is obvious when viewing the relative widths, which implies that for
$95\%$ CIs, the EB CIs of form (\ref{eq: Wasserman Efron CI}) are twice as wide,
on average, when compared to the CIs of form (\ref{eq: Morris Efron CI}).
However, the relative widths decrease as $\alpha$ gets smaller, implying
that the intervals perform relatively similarly when a high level of
confidence is required. We further observe that $n$ and $\psi^{2}$
had little effect on the performances of the intervals except in the
case when $n=10$ and $\psi^{2}=1$, whereupon it was possible for the
intervals of form (\ref{eq: Morris Efron CI}) to not be computable
in some cases.

From these results we can make a number of conclusions. Firstly, if
one is willing to make the necessary hierarchical
and objective Bayesian assumptions, \textcolor{black}{as stated in \citet[Sec. 1.5]{Efron2010}}, then the intervals of form (\ref{eq: Morris Efron CI})
provide very good performance. However, without those assumptions,
we can still obtain reasonable CIs that have correct coverage via
the FSEB methods from Section \ref{sec:Confidence-sets-and}. Furthermore,
these intervals become more efficient compared to (\ref{eq: Morris Efron CI}) when
higher levels of confidence are desired. Lastly, when $n$ is small
and $\psi^{2}$ is also small, the intervals of form (\ref{eq: Morris Efron CI})
can become uncomputable and thus one may consider the use of (\ref{eq: Wasserman Efron CI})
as an alternative.

\subsection{Poisson--gamma count model}

The following example is taken from \citet{Koenker:2017tq} and was
originally studied in \citet{Norberg1989Experience-rati} and then
subsequently in \citet{Haastrup2000Comparison-of-s}. In this example,
we firstly consider IID parameters $\left(\Theta_{i}\right)_{i\in\left[n\right]}$
generated with gamma DGP: $\Theta_{i}\sim\text{Gamma}\left(a,b\right)$,
for each $i\in\left[n\right]$, where $a>0$ and $b>0$ are the shape
and rate hyperparameters, respectively, which we put into $\bm{\psi}$.
Then, for each $i$, we suppose that the data $\mathbf{D}_{n}=\left(X_{i}\right)_{i\in\left[n\right]}$,
depending on the covariate sequence $\mathbf{w}_{n}=\left(w_{i}\right)_{i\in\left[n\right]}$,
has the Poisson DGP: $\left[X_{i}|\Theta_{i}=\theta_{i}\right]\sim\text{Poisson}\left(\theta_{i}w_{i}\right)$,
where $w_{i}>0$. We again wish to use the data $\mathbf{D}_{n}$
to estimate the realization of $\Theta_{n}$: $\theta_{n}^{*}$, which
characterizes the DGP of $X_{n}$.

Under the specification above, for each $i$, we have the fact that
$\left(X_{i},\Theta_{i}\right)$ has the joint PDF:
\begin{equation}
f\left(x_{i},\theta_{i};\bm{\psi}\right)=\frac{b^{a}}{\Gamma\left(a\right)}\theta_{i}^{a-1}\exp\left(b\theta_{i}\right)\frac{\left(\theta_{i}w_{i}\right)^{x_{i}}\exp\left(-\theta_{i}w_{i}\right)}{x_{i}}\text{,}\label{eq: joint poisson-gamma}
\end{equation}
which we can marginalize to obtain
\begin{equation}
f\left(x_{i};\bm{\psi}\right)={x_{i}+a+1 \choose x_{i}}\left(\frac{b}{w_{i}+b}\right)^{a}\left(\frac{w_{i}}{w_{i}+b}\right)^{x_{i}}\text{,}\label{eq: marginal poisson-gamma}
\end{equation}
and which can be seen as a Poisson--gamma mixture model. We can then
construct the likelihood of $\mathbf{D}_{n}$ using expression (\ref{eq: marginal poisson-gamma}),
from which we may compute maximum likelihood estimates $\hat{\bm{\psi}}_{n}=\left(\hat{a}_{n},\hat{b}_{n}\right)$
of $\bm{\psi}$. Upon noting that (\ref{eq: joint poisson-gamma})
implies the conditional expectation $\text{E}\left[\Theta_{i}|X_{i}=x_{i}\right]=\left(x_{i}+a\right)/\left(w_{i}+b\right)$,
we obtain the estimator for $\theta_{n}^{*}$:
\begin{equation}
\hat{\theta}_{n}=\frac{X_{i}+\hat{a}_{n}}{w_{i}+\hat{b}_{n}}\text{.}\label{eq: poisson--gamma estimator}
\end{equation}

\subsubsection{Confidence intervals\label{subsec:Confidence-intervals-Poisson}}

We again wish to apply the general result from Section \ref{sec:Confidence-sets-and}
to construct CIs. Firstly, we have $\mathbb{I}=\left\{ n\right\} $
and 
\[
f\left(x_{n}|\theta_{n}\right)=\frac{\left(\theta_{n}w_{n}\right)^{x_{n}}\exp\left(-\theta_{n}w_{n}\right)}{x_{n}}\text{ and }\pi\left(\theta_{n};\bm{\psi}\right)=\frac{b^{a}}{\Gamma\left(a\right)}\theta_{n}^{a-1}\exp\left(b\theta_{n}\right)\text{.}
\]
As per (\ref{eq: marginal poisson-gamma}), we can write
\[
L_{\mathbb{I}}\left(\bm{\psi}\right)={X_{n}+a+1 \choose X_{n}}\left(\frac{b}{w_{n}+b}\right)^{a}\left(\frac{w_{n}}{w_{n}+b}\right)^{X_{n}}\text{.}
\]
Then, since $l_{\mathbb{I}}\left(\theta_{n}\right)=f\left(X_{n}|\theta_{n}\right)$,
we have
\begin{align*}
R_{\mathbb{I},n}\left(\theta_{n}\right) & =L_{\mathbb{I}}\left(\bm{\psi}\right)/l_{\mathbb{I}}\left(\theta_{n}\right)\\
 & ={X_{n}+\hat{a}_{-n}+1 \choose X_{n}}\left(\frac{\hat{b}_{-n}}{w_{n}+\hat{b}_{-n}}\right)^{\hat{a}_{-n}}\left(\frac{w_{n}}{w_{n}+\hat{b}_{-n}}\right)^{X_{n}}\frac{X_{n}}{\left(\theta_{n}w_{n}\right)^{X_{n}}\exp\left(-\theta_{n}w_{n}\right)}\text{,}
\end{align*}
when combined with an estimator $\hat{\bm{\psi}}_{-n}=\left(\hat{a}_{-n},\hat{b}_{-n}\right)$
of $\bm{\psi}$, using only $\bar{\mathbf{D}}_{\mathbb{I},n}=\mathbf{D}_{n-1}$. 

For any $\alpha\in\left(0,1\right)$, we then obtain a $100\left(1-\alpha\right)\%$
CI for $\theta_{n}$ by solving $R_{\mathbb{I},n}\left(\theta_{n}\right)\le1/\alpha$,
which can be done numerically. We shall use the MLE of $\bm{\psi}$,
computed with the data $\bar{\mathbf{D}}_{\mathbb{I},n}$ and marginal PDF (\ref{eq: marginal poisson-gamma}),
as the estimator $\hat{\bm{\psi}}_{-n}$.

To demonstrate the performance of the CI construction, above, we conduct
the following numerical experiment. We generate data sets consisting
of $n\in\left\{ 10,100,1000\right\} $ observations characterized by
hyperparameters $\bm{\psi}=\left(a,b\right)=\left\{ \left(2,2\right),\left(2,5\right),\left(5,2\right)\right\} $,
and we compute intervals using significance levels $\alpha\in\left\{ 0.05,0.005,0.0005\right\} $.
Here, we shall generate $\mathbf{w}_{n}$ IID uniformly between 0
and 10. For each triplet $\left(n,\bm{\psi},\alpha\right)$, we repeat
the construction of our CIs 1000 times and record the coverage probability
and average width for each case. The results of the experiment are
reported in Table \ref{tab:Poisson-gamma mixture sim - conf}.

\begin{table}
\caption{\label{tab:Poisson-gamma mixture sim - conf}Experimental results
for CIs constructed for Poisson--gamma count models. The Coverage
and Length columns report the coverage proportion and average lengths
in each scenario, as computed from 1000 replications.}

\begin{centering}
\begin{tabular}{lllll}
\hline 
$n$ & $\bm{\psi}$ & $\alpha$ & Coverage & Length\tabularnewline
\hline 
\hline 
10 & $\left(2,2\right)$ & 0.05 & 0.998 & 3.632\tabularnewline
 &  & 0.005 & 1.000 & 5.484\tabularnewline
 &  & 0.0005 & 1.000 & 6.919\tabularnewline
 & $\left(2,5\right)$ & 0.05 & 0.999 & 2.976\tabularnewline
 &  & 0.005 & 0.999 & 3.910\tabularnewline
 &  & 0.0005 & 1.000 & 5.481\tabularnewline
 & $\left(5,2\right)$ & 0.05 & 0.997$^{*}$ & 5.468$^{*}$\tabularnewline
 &  & 0.005 & 0.999$^{*}$ & 7.118$^{*}$\tabularnewline
 &  & 0.0005 & 1.000$^{*}$ & 8.349$^{*}$\tabularnewline
\hline 
100 & $\left(2,2\right)$ & 0.05 & 0.998 & 3.898\tabularnewline
 &  & 0.005 & 0.999 & 5.277\tabularnewline
 &  & 0.0005 & 1.000 & 6.883\tabularnewline
 & $\left(2,5\right)$ & 0.05 & 0.999 & 2.958\tabularnewline
 &  & 0.005 & 1.000 & 3.914\tabularnewline
 &  & 0.0005 & 1.000 & 5.374\tabularnewline
 & $\left(5,2\right)$ & 0.05 & 1.000 & 5.628\tabularnewline
 &  & 0.005 & 1.000 & 7.124\tabularnewline
 &  & 0.0005 & 1.000 & 8.529\tabularnewline
\hline 
1000 & $\left(2,2\right)$ & 0.05 & 1.000 & 4.070\tabularnewline
 &  & 0.005 & 1.000 & 5.424\tabularnewline
 &  & 0.0005 & 1.000 & 6.344\tabularnewline
 & $\left(2,5\right)$ & 0.05 & 0.999 & 3.049\tabularnewline
 &  & 0.005 & 1.000 & 3.960\tabularnewline
 &  & 0.0005 & 1.000 & 5.479\tabularnewline
 & $\left(5,2\right)$ & 0.05 & 0.998 & 5.297\tabularnewline
 &  & 0.005 & 1.000 & 7.205\tabularnewline
 &  & 0.0005 & 1.000 & 8.714\tabularnewline
\hline 
\end{tabular}
\par\end{centering}
$^{*}$The results on these lines are computed from 999, 999, and
998 replicates, respectively. This was due to there being no solutions
to the inequality $R_{\mathbb{I},n}\left(\theta_{n}\right)\le1/\alpha$,
with respect to $\theta_{n}>0$ in some cases.
\end{table}

From Table \ref{tab:Poisson-gamma mixture sim - conf}, we observe
that the empirical coverage of the CIs are higher than the nominal
value and are thus behaving as per the conclusions of Proposition
\ref{Prop: Confidence}. As expected, we also find that increasing
the nominal confidence level also increases the coverage proportion,
but at a cost of increasing the lengths of the CIs. From the usual
asymptotic theory of maximum likelihood estimators, we anticipate
that increasing $n$ will decrease the variance of the estimator $\hat{\bm{\psi}}_{-n}$.
However, as in Section \ref{subsec:Stein's-problem}, this does not
appear to have any observable effect on either the coverage proportion
nor lengths of the CIs.

\subsubsection{Hypothesis tests}

Next, we consider testing the null hypothesis $\text{H}_{0}$: $\theta_{n-1}^{*}=\theta_{n}^{*}$.
To this end, we use the hypothesis testing framework from Section
\ref{sec:Confidence-sets-and}. That is, we let $\mathbb{I}=\left\{ n-1,n\right\} $
and estimate $\bm{\psi}$ via the maximum likelihood estimator $\hat{\bm{\psi}}_{\mathbb{I},n}=\left(a_{\mathbb{I},n},b_{\mathbb{I},n}\right)$,
computed from the data $\bar{\mathbf{D}}_{\mathbb{I},n}=\mathbf{D}_{n-2}$. 

We can write
\[
L_{\mathbb{I}}\left(\hat{\bm{\psi}}_{\mathbb{I},n}\right)=\prod_{i=n-1}^{n}{X_{i}+a_{\mathbb{I},n}+1 \choose X_{i}}\left(\frac{b_{\mathbb{I},n}}{w_{i}+b_{\mathbb{I},n}}\right)^{a_{\mathbb{I},n}}\left(\frac{w_{i}}{w_{i}+b_{\mathbb{I},n}}\right)^{X_{i}}\text{,}
\]
\[
l_{\mathbb{I}}\left(\bm{\vartheta}_{\mathbb{I}}^{*}\right)=\prod_{i=n-1}^{n}\frac{\left(\theta_{i}^{*}w_{i}\right)^{X_{n}}\exp\left(-\theta_{i}^{*}w_{i}\right)}{X_{i}}\text{,}
\]
and $\bm{\vartheta}_{\mathbb{I}}^{*}=\left(\theta_{n-1}^{*},\theta_{n}^{*}\right)$.
We are also required to compute the maximum likelihood estimator of
$\bm{\vartheta}_{\mathbb{I}}^{*}$, under $\text{H}_{0}$, as per
(\ref{eq: MLE}), which can be written as
\[
\tilde{\bm{\vartheta}}_{\mathbb{I}}\in\left\{ \tilde{\bm{\theta}}=\left(\theta,\theta\right):l_{\mathbb{I}}\left(\tilde{\bm{\theta}}\right)=\sup_{\theta>0}\text{ }\prod_{i=n-1}^{n}\frac{\left(\theta w_{i}\right)^{X_{n}}\exp\left(-\theta w_{i}\right)}{X_{i}}\right\} \text{.}
\]
Using the components above, we define the test statistic $T_{\mathbb{I}}\left(\mathbf{D}_{n}\right)=L_{\mathbb{I}}\left(\hat{\bm{\psi}}_{\mathbb{I},n}\right)/l_{\mathbb{I}}\left(\tilde{\bm{\vartheta}}_{\mathbb{I}}\right)$,
from which we can derive the $p$-value $P_{\mathbb{I}}\left(\mathbf{D}_{n}\right)=1/T_{\mathbb{I}}\left(\mathbf{D}_{n}\right)$
for testing $\text{H}_{0}$.

To demonstrate the application of this test, we conduct another numerical
experiment. As in Section \ref{subsec:Confidence-intervals-Poisson},
we generate data sets of sizes $n\in\left\{ 10,100,1000\right\} $,
where the data $\mathbf{D}_{n-1}$ are generated with parameters $\left(\Theta_{i}\right)_{i\in\left[n-1\right]}$
arising from gamma distributions with hyperparameters $\bm{\psi}=\left(a,b\right)=\left\{ \left(2,2\right),\left(2,5\right),\left(5,2\right)\right\} $.
The final observation $X_{n}$, making up $\mathbf{D}_{n}$, is then
generated with parameter $\Theta_{n}=\Theta_{n-1}+\Delta$, where $\Delta\in\left\{ 0,1,5,10\right\} $.
As before, we generate the covariate sequence $\mathbf{w}_{n}$ IID
uniformly between 0 and 10. For each triplet $\left(n,\bm{\psi},\Delta\right)$,
we test $\text{H}_{0}$: $\theta_{n-1}^{*}=\theta_{n}^{*}$ 1000 times
and record the average number of rejections under at the levels of
significance $\alpha\in\left\{ 0.05,0.005,0.0005\right\} $. The results
are then reported in Table \ref{tab:Poisson-gamma mixture sim - test}.

\begin{table}
\caption{\label{tab:Poisson-gamma mixture sim - test}Experimental results
for testing the hypothesis $\text{H}_{0}$: $\theta_{n-1}^{*}=\theta_{n}^{*}$
for Poisson--gamma count models. The Rejection Proportion columns
report the average number of rejections, from 1000 tests, at levels
of significance $\alpha\in\left\{ 0.05,0.005,0.0005\right\} $.}

\centering{}%
\begin{tabular}{llllll}
\hline 
 &  &  & \multicolumn{3}{l}{Rejection Proportion at level $\alpha$}\tabularnewline
$n$ & $\bm{\psi}$ & $\Delta$ & $0.05$ & $0.005$ & $0.0005$\tabularnewline
\hline 
\hline 
10 & $\left(2,2\right)$ & 0 & 0.000 & 0.000 & 0.000\tabularnewline
 &  & 1 & 0.004 & 0.000 & 0.000\tabularnewline
 &  & 5 & 0.280 & 0.193 & 0.128\tabularnewline
 &  & 10 & 0.413 & 0.363 & 0.317\tabularnewline
 & $\left(2,5\right)$ & 0 & 0.000 & 0.000 & 0.000\tabularnewline
 &  & 1 & 0.007 & 0.002 & 0.000\tabularnewline
 &  & 5 & 0.143 & 0.096 & 0.064\tabularnewline
 &  & 10 & 0.222 & 0.192 & 0.170\tabularnewline
 & $\left(5,2\right)$ & 0 & 0.001 & 0.000 & 0.000\tabularnewline
 &  & 1 & 0.001 & 0.000 & 0.000\tabularnewline
 &  & 5 & 0.177 & 0.107 & 0.052\tabularnewline
 &  & 10 & 0.389 & 0.320 & 0.254\tabularnewline
\hline 
100 & $\left(2,2\right)$ & 0 & 0.000 & 0.000 & 0.000\tabularnewline
 &  & 1 & 0.014 & 0.003 & 0.000\tabularnewline
 &  & 5 & 0.401 & 0.289 & 0.194\tabularnewline
 &  & 10 & 0.562 & 0.489 & 0.427\tabularnewline
 & $\left(2,5\right)$ & 0 & 0.000 & 0.000 & 0.000\tabularnewline
 &  & 1 & 0.015 & 0.000 & 0.000\tabularnewline
 &  & 5 & 0.208 & 0.127 & 0.074\tabularnewline
 &  & 10 & 0.296 & 0.235 & 0.179\tabularnewline
 & $\left(5,2\right)$ & 0 & 0.000 & 0.000 & 0.000\tabularnewline
 &  & 1 & 0.004 & 0.000 & 0.000\tabularnewline
 &  & 5 & 0.264 & 0.150 & 0.090\tabularnewline
 &  & 10 & 0.500 & 0.425 & 0.344\tabularnewline
\hline 
1000 & $\left(2,2\right)$ & 0 & 0.001 & 0.000 & 0.000\tabularnewline
 &  & 1 & 0.021 & 0.001 & 0.000\tabularnewline
 &  & 5 & 0.423 & 0.300 & 0.216\tabularnewline
 &  & 10 & 0.576 & 0.513 & 0.450\tabularnewline
 & $\left(2,5\right)$ & 0 & 0.000 & 0.000 & 0.000\tabularnewline
 &  & 1 & 0.012 & 0.000 & 0.000\tabularnewline
 &  & 5 & 0.185 & 0.108 & 0.061\tabularnewline
 &  & 10 & 0.321 & 0.254 & 0.197\tabularnewline
 & $\left(5,2\right)$ & 0 & 0.000 & 0.000 & 0.000\tabularnewline
 &  & 1 & 0.003 & 0.001 & 0.000\tabularnewline
 &  & 5 & 0.276 & 0.168 & 0.088\tabularnewline
 &  & 10 & 0.507 & 0.428 & 0.354\tabularnewline
\hline 
\end{tabular}
\end{table}

The results for the $\Delta=0$ cases in Table \ref{tab:Poisson-gamma mixture sim - test}
show that the tests reject true null hypotheses at below the nominal
sizes $\alpha$, in accordance with Proposition \ref{prop: test}.
For each combination of $n$ and $\bm{\psi}$, as $\Delta$ increases,
the proportion of rejections increase, demonstrating that the tests
become more powerful when detecting larger differences between $\theta_{n-1}^{*}$
and $\theta_{n}^{*}$, as expected. There also appears to be an increase
in power due to larger sample sizes. This is an interesting outcome,
since we can only be sure that sample size affects the variability
of the estimator $\bm{\psi}_{\mathbb{I},n}$. Overall, we can be
confident that the tests are behaving as required, albeit they may
be somewhat underpowered  as they are not achieving the nominal
sizes.

\subsection{\label{sec:sim-betabinom} Beta--binomial data series }

Data from genome-level biological studies,  using modern high-throughput sequencing technologies \citep{Krueger12},  often take the form of a series of counts, which may be modelled through sets of non-identical (possibly correlated) binomial distributions, with beta priors, in a Bayesian framework. The question of interest may vary, for example, from assessing the range of likely values for the binomial parameter in a particular region of the data, to comparing whether two sections of one or more data series are generated from identical distributions. For purposes of demonstrating the performance of the FSEB method in these scenario,  we will make the simplifying assumption that all data points are independently distributed, within, as well as across, any of $G$ data series that may be observed.

\subsubsection{\label{sec:betabinomCI}Confidence Sets}

First, let us assume that we only have a single series, i.e. $G=1$. Then, we can assume $X_{i} \sim \mbox{Bin}(m_{i}, \theta_{i})$, and propose a common prior distribution for
$\Theta_{i}$ $(i=1,\ldots, n)$: Beta($\gamma, \beta$). Using the techniques described in Section
\ref{sec:Confidence-sets-and}, we can find confidence sets for
$\theta^{*}_{i}$, $(i=1,
\ldots, n)$. For each  $i$, we define, as
previously, a subset $\mathbb{I} = \{i\}$, so that
$\mathbf{D}_{\mathbb{I}}= X_{i}$
and
$\overline{\mathbf{D}}_{\mathbb{I}}=\left(X_{i}\right)_{i\in\left[n\right]\backslash\{i\}}$. We
then have, 
\[
  R_{\mathbb{I},n}\left(\bm{\vartheta}_{\mathbb{I}}\right)
  = \frac{L_{\mathbb{I}}\left(\hat{\bm{\psi}}_{\mathbb{I},n}\right)}{l_{\mathbb{I}}\left(\bm{\vartheta}_{\mathbb{I}}\right)},
  \]
  where 
  \begin{align*}
    l_{\mathbb{I}}\left(\bm{\vartheta}_{\mathbb{I}}\right) &=  \binom{m_{i}}{ x_{i}}
\theta_i^{x_{i}} (1 - \theta_i)^{m_{i}-x_{i}}
\end{align*}
and
\begin{align*}
    L_{\mathbb{I}}\left(\hat{\bm{\psi}}_{\mathbb{I},n}\right)   &=
\int_{\theta_{i}} f(x_{i} |
\theta_{i}) \pi(\theta_{i};\; \hat{\gamma}_{-n}, \hat{\beta}_{-n})
\text{d} \theta_{i},
\end{align*}
    which gives the ratio
    \begin{align}
      R_{\mathbb{I},n}\left(\bm{\vartheta}_{\mathbb{I}}\right) &=
      \frac{B(x_{i} +
  \hat{\gamma}_{-n}, m_{i}-x_{i} +\hat{\beta}_{-n}
  )}{B(\hat{\gamma}, \hat{\beta}_{-n}) \theta_{i}^{x_{i}} (1 - \theta_{i})^{m_{i}-x_{i}} }.
\label{eq:betabinR}
    \end{align}
Here, $\hat{\gamma}_{-n}$ and $\hat{\beta}_{-n}$ are the empirical Bayes estimates of
 $\gamma$ and $\beta$, given by
 \begin{align*}
   \hat{\gamma}_{-n} &= (\hat{\phi}_\text{EB}^{-1} - 1) \hat{\mu}_\text{EB}
\end{align*}
and
\begin{align*}
    \hat{\beta}_{-n} &=    (\hat{\phi}_\text{EB}^{-1} - 1) (1 - \hat{\mu}_\text{EB}) ,          
\end{align*}
  where  
\begin{align*}
  \hat{\mu}_\text{EB} &= \frac{1}{n-1} \sum_{j\in \left[n\right]\backslash i} \frac{x_{j}}{m_{j}},\\
  \hat{\phi}_\text{EB} &= \left[\frac{\bar{m} \hat{V}_x}{\mu(1-\mu)} - 1 \right]\bigg{/} (\bar{m} -1),
\end{align*}
 $\bar{m} = \frac{1}{n-1} \sum_{j\in \left[n\right]\backslash i} m_{j}$, and $\hat{V}_x =
\frac{1}{n-1} \sum_{j\in \left[n\right]\backslash i} (\frac{x_{j}}{m_{j}} - \hat{\mu}_\text{EB})^2$. Further, $B\left(a,b\right)=\int_{0}^{1}t^{a-1}\left(1-t\right)^{b-1}\text{d}t$ is the Beta function, taking inputs $a>0$ and $b>0$.

We simulated data from the binomial model under two cases: (a) setting
beta hyperparameters $(\alpha, \beta) = (10,10)$, and hierarchically simulating
$\theta_{i}^{*}$, $i\in[n]$, and then $x_{i}$ from a binomial distribution;
and (b) setting  a range of $\theta_i^{*}$ ($i\in[n]$) values equidistantly spanning the interval $(0.1,
0.9)$ for $n=10, 100$. Here, $m_i$
($i\in[n]$) were given integer values uniformly generated in
the range $[15, 40]$. In
all cases, it was seen that the CIs had
perfect coverage, always containing the true value of $\theta_{i}^{*}$. An example of the $n=10$ case is shown in   Figure \ref{fig-betabinomCI-a}.

\begin{figure}
  \begin{center}
\includegraphics[width=0.7\textwidth]{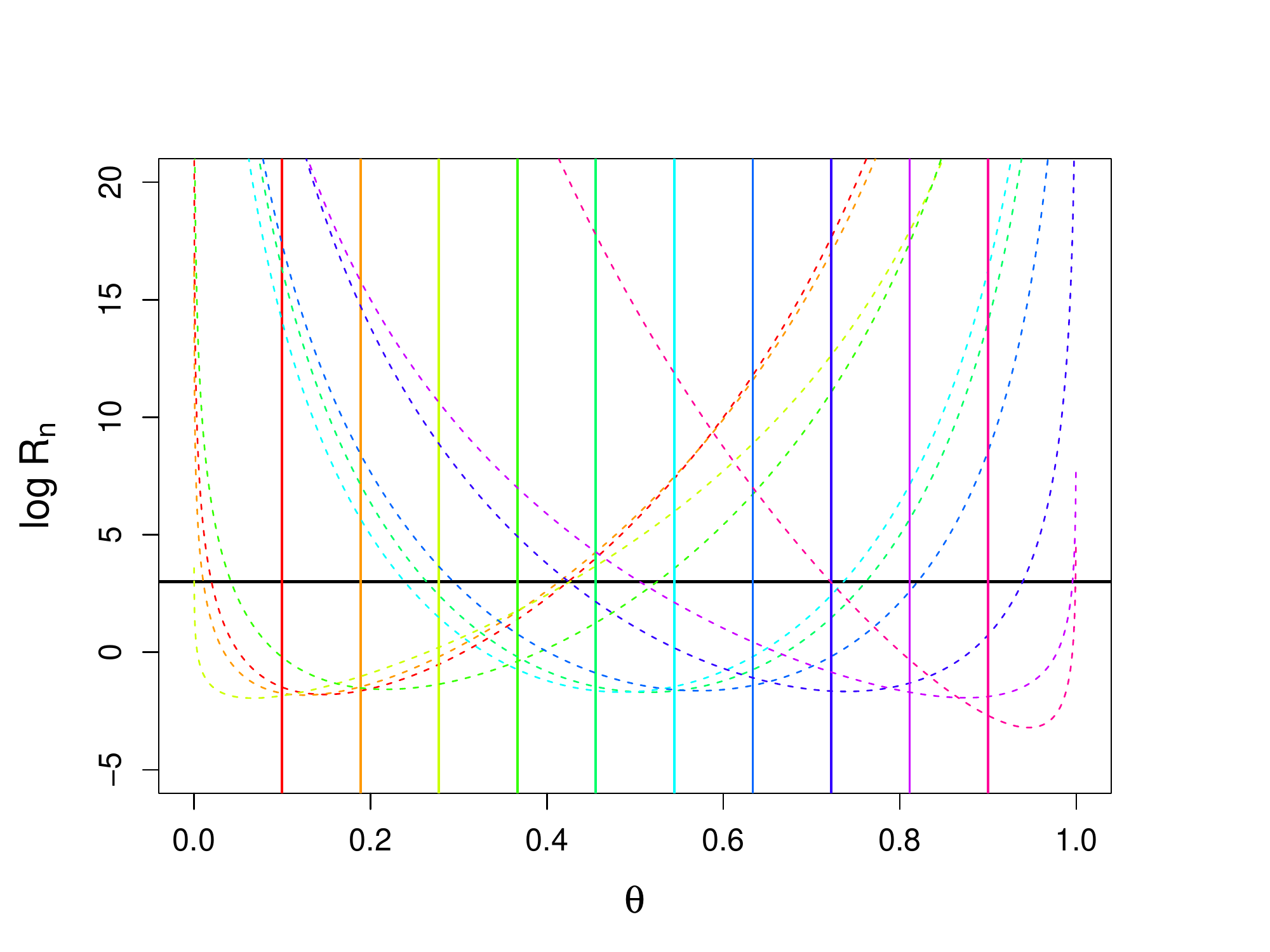}

\caption{\label{fig-betabinomCI-a}
      Plots of 95\% confidence regions for
      $\theta_i^{*}$ when true
      values of $\theta_i^{*}$  span the interval 0.1 to 0.9 ($n=10$). Here, the $95\%$
      CIs are given by the points where the curves
      for $\log R_{\mathbb{I},n}\left(\bm{\vartheta}_{\mathbb{I}}\right)$
      intersect with the horizontal line (black), representing a
      confidence level of $1-\alpha=0.95$. Each CI can
      be seen to contain the corresponding true value of $\theta_i^{*}$,
      represented by a vertical line of the same colour as the interval.
  } 
  \end{center}
  \end{figure}

\subsubsection{Hypothesis testing}
  
Aiming to detect genomic regions that may have differing characteristics between two series, a  pertinent
question of interest may be considered by testing the hypotheses: $H_0$:
$\theta_{i1}^{*} = \theta_{i2}^{*}$ vs. $H_1$: $\theta_{i1}^{*} \neq \theta_{i2}^{*}$,
for every $i \in [n]$ (with $G=2$ series). Then, $\mathbf{D}_{n}
=\left(\bm{X}_{i}\right) _{i\in\left[n\right]}$, where $\bm{X}_{i} =
(X_{i1}, X_{i2})$.
From Section \ref{sec:Confidence-sets-and}, the ratio test
statistic takes the form
\[
T_{\mathbb{I}}\left(\mathbf{D}_{n}\right) =
L_{\mathbb{I}}\left(\hat{\gamma}_{\mathbb{I},n}, \hat{\beta}_{\mathbb{I},n}\right)/l_{\mathbb{I}}\left(\tilde{\bm{\vartheta}}_{\mathbb{I}}\right)\text{,}
  \]
where $\hat{\gamma}_{\mathbb{I},n}$ and
$\hat{\beta}_{\mathbb{I},n}$ are EB estimators of $\gamma$ and
$\beta$, depending only on  $\bar{\mathbf{D}}_{\mathbb{I},n} =
\mathbf{D}_{n}\backslash \{X_{i1}, X_{i2}\}$. With
$\tilde{{\vartheta}}_{\mathbb{I}} = \frac{x_{i1} + x_{i2}}{m_{i1}+m_{i2}}  = \tilde{\theta}_i$, write
$
l_{\mathbb{I}}\left(\tilde{{\vartheta}}_{\mathbb{I}}\right)
= f( x_{i1}, x_{i2}| \tilde{\theta}_i)$, and
\begin{align*}
L_{\mathbb{I}}\left(\hat{\gamma}_{\mathbb{I},n}, \hat{\beta}_{\mathbb{I},n}\right)
&= \int_{\mathbb{T}} f(x_{i1} |
\bm{\theta}_{i}) f(x_{i2} |
\bm{\theta}_{i})  \pi(\bm{\theta}_{i};\; \hat{\gamma}_{\mathbb{I},n}, \hat{\beta}_{\mathbb{I},n})
\text{d} \bm{\theta}_{i}\\
&= \binom{m_{i1}}{ x_{i1}} \binom{m_{i2}}{x_{i2}}
      \frac{B(x_{i1}+\hat{\gamma}_{\mathbb{I},n}, m_{i1} - x_{i1} +\hat{\beta}_{\mathbb{I},n})
B(x_{i2} +\hat{\gamma}_{\mathbb{I},n}, m_{i2} -
   x_{i2}+\hat{\beta}_{\mathbb{I},n})
                      }{\left[B(\hat{\gamma}_{\mathbb{I},n}, \hat{\beta}_{\mathbb{I},n})\right]^2},
\end{align*}
which gives
\[
  T_{\mathbb{I}}\left(\mathbf{D}_{n}\right) =
\frac{B(x_{i1}+\hat{\gamma}_{\mathbb{I},n}, m_{i1} - x_{i1} +\hat{\beta}_{\mathbb{I},n})
B(x_{i2} +\hat{\gamma}_{\mathbb{I},n}, m_{i2} -
   x_{i2}+\hat{\beta}_{\mathbb{I},n})}{[B(\hat{\gamma}_{\mathbb{I},n}, \hat{\beta}_{\mathbb{I},n})]^2
\tilde{\theta}_i^{x_{i1}+x_{i2} } (1 - \tilde{\theta}_i)^{m_{i1} +m_{i2} - x_{i1} -
   x_{i2}}},
  \]
  where $\hat{\gamma}_{\mathbb{I},n}$ and
  $\hat{\beta}_{\mathbb{I},n}$ are calculated in a similar fashion
  to Section \ref{sec:betabinomCI} except that data from both
  sequences should be used to estimate $\hat{\mu}_\text{EB}$ and
$\hat{\phi}_\text{EB}$, in the sense that
\begin{align*}
  \hat{\mu}_\text{EB} &= \frac{1}{2n-2} \sum_{k \neq i} \sum_{g=1}^{2} \frac{x_{kg}}{m_{kg}}, \mbox{ and}\\
  \hat{\phi}_\text{EB} &= \left[ \frac{\bar{m} V_{xy}}{ \hat{\mu}_\text{EB}( 1-\hat{\mu}_\text{EB})} -1 \right]\bigg{/} (\bar{m} - 1),
  \end{align*}
  where
\begin{align*}
  \bar{m} &= \frac{1}{2n-2}\sum_{k \neq i} \sum_{g=1}^{2} m_{kg}, \mbox{ and } \\
 V_{xy} &= \frac{1}{2n-2} \sum_{k \neq i} \sum_{g=1}^{2}
                       \left(\frac{x_{kg}}{m_{kg}} -  \hat{\mu}_\text{EB}\right)^2.
   \end{align*}

In our first simulation, we assessed the performance of the test
statistic in terms of the Type I error. Assuming a window size of  $n=20$, realized
data $(x_{i1}, x_{i2})$ ($i\in[n]$),  were simulated from independent binomial
distributions with $\theta_{i1}^{*}= \theta_{i2}^{*} =\theta_i^{*}$ ($i = 1,
\ldots, n$), with $\theta_i^{*}$ 
ranging between $0.1$ and $0.9$, and $m_{i1}, m_{i2}\in\mathbb{N}$ uniformly and independently
sampled from the range
$[15, 40]$. The first panel of Figure \ref{fig-betabinom-Test1}  shows
the calculated
test statistic values $T_{\mathbb{I}}\left(\mathbf{D}_{n}\right)$ for the $20$ genomic indices on the
logarithmic scale, over 100 independently
replicated datasets, with horizontal lines displaying values $\log(1/\alpha)$, for
significance levels 
$\alpha \in \{0.01, 0.02, 0.05\}$. No points were observed above the line
corresponding to $\alpha=0.01$, indicating that the Type I error of the
test statistic does not exceed the nominal level.
Next, we assessed the power of the test statistic
  at three levels of significance
($\alpha \in \{ 0.01, 0.02, 0.05\}$) and differing effect sizes. For each $i$ ($i\in[n]$),  $\theta_{i1}^{*}$ was set to be a value
between $0.05$ and $0.95$, and $\theta_{i2}^{*} = \theta_{i1}^{*}+\Delta$, where $0.1<\Delta<0.9$  (with $\theta_{i2}^{*} <1$). A sample of $20$
replicates were simulated under each possible set of values of
$(\theta_1^{*}, \theta_2^{*})$. The second panel of Figure \ref{fig-betabinom-Test1} shows that the power functions increased
rapidly to 1 as the difference $\Delta$ was increased.

\begin{figure}
  \begin{center}
   \includegraphics[width=0.45\textwidth]{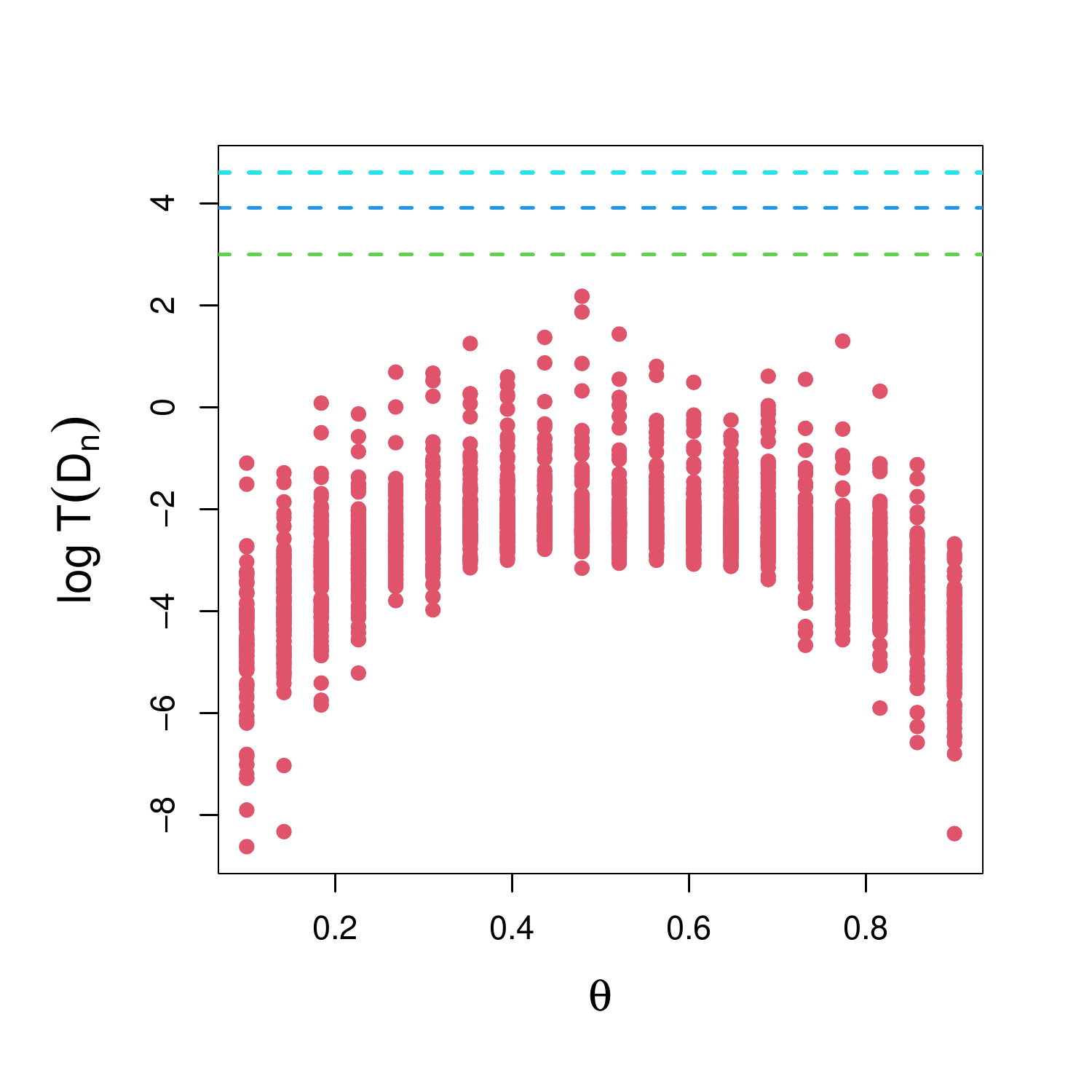}
   \includegraphics[width=0.45\textwidth]{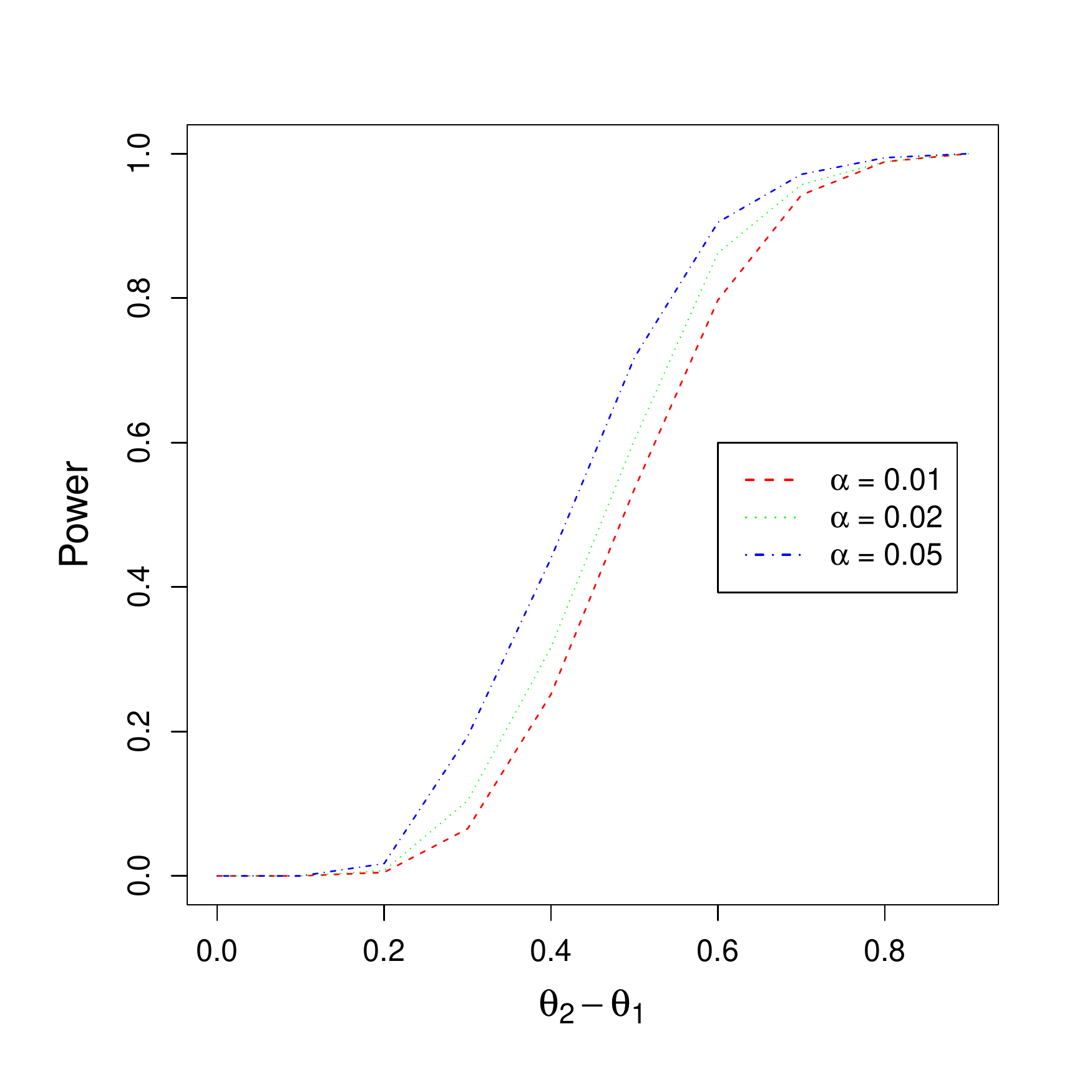}

    \caption{\label{fig-betabinom-Test1}  Panel (a): Test statistic for 100
      replications of the beta--binomial example under the null
      hypothesis of equality of proportions. The three horizontal lines correspond to cutoffs according to significance levels of $\alpha= 0.05$ (green), $\alpha= 0.02$ (blue), and $\alpha= 0.01$ (turquoise).
       Panel (b): Power function
      over different values of $\Delta =\theta_2^{*} -\theta_1^{*}$ at three
       levels of significance: $\alpha \in \{0.01, 0.02, 0.05\}$.}
    \end{center}
  \end{figure}
 
 In our next numerical experiment, we generated data sets of sizes $n \in \{10, 100, 1000 \}$, where realized
observations $x_{i1}$, and $x_{i2}$ are simulated from independent binomial
distributions with parameters $\theta_{i1}^{*}$ and $\theta_{i2}^{*}$, respectively ($i\in[n]$). 
For each $i$, $\theta_{i1^{*}}$ was generated from a beta distribution, in turn, with hyperparameters $\bm{\psi} = (\gamma, \beta) \in \{(2,2), (2,5), (5,2)\}$; and $\theta_{i2}^{*} = \theta_{i1}^{*} + \Delta$, where $\Delta  \in \{0, 0.2, 0.5, 0.9\}$. We generated 100 instances of data under each setting and assessed the power of the FSEB test statistic through the number of rejections at levels  $\alpha \in \{ 0.0005, 0.005, 0.05 \}$. The results are shown in Table \ref{tab:beta-binomial sim - test}. 

Similarly to the Poisson--gamma example, it can be seen that the tests reject true null hypotheses at below the nominal sizes $\alpha$, in each case.  For each combination of $n$ and $\bm{\psi}$, as $\Delta$ increases, the rejection rate increases, making the tests more powerful as expected, when detecting larger differences between $\theta_{i1}^{*}$ and $\theta_{i2}^{*}$, frequently reaching a power of 1 even when the difference was not maximal. There did not appear to be a clear increase in power with the sample size, within the settings considered. Overall, we may conclude, as previously, that the tests are behaving as expected, although both this example and the Poisson--gamma case show that the tests may be underpowered as they do not achieve the nominal size for any value of $\alpha$. 
 
 % latex table generated in R 4.1.0 by xtable 1.8-4 package
% Tue Jul 13 18:54:37 2021
\begin{table}
\caption{\label{tab:beta-binomial sim - test}Experimental results
for testing the hypothesis $\text{H}_{0}$: $\theta_{i1}^{*}=\theta_{i2}^{*}$
for Beta--binomial count series models. The Rejection proportion columns
report the average number of rejections, from 100 test replicates, at levels
of significance $\alpha\in\left\{ 0.05,0.005,0.0005\right\} $.}

\centering{}
\begin{tabular}{lcllll}
  \hline
   &   &   & \multicolumn{3}{c}{Rejection proportion at level $\alpha$} \\
$n$ & $\bm{\psi}$ &$\Delta$ & $0.0005$ & $0.005$ & $0.05$  \\ 
  \hline
  \hline 
  10 & $(2,2)$& 0 & 0.000 & 0.000 & 0.000 \\ 
  & & 0.2 & 0.000 & 0.004 & 0.039 \\ 
  & & 0.5 & 0.305 & 0.471 & 0.709 \\ 
  & & 0.9 & 0.980 & 1.000 & 1.000 \\ 
  &$(2,5)$ & 0 & 0.000 & 0.000 & 0.000 \\ 
  & & 0.2 & 0.000 & 0.001 & 0.025 \\ 
  & & 0.5 & 0.249 & 0.464 & 0.692 \\ 
  & & 0.9 & 0.995 & 1.000 & 1.000 \\ 
  &$(5,2)$ & 0 & 0.000 & 0.000 & 0.000 \\ 
  & & 0.2 & 0.000 & 0.006 & 0.052 \\ 
  & & 0.5 & 0.281 & 0.459 & 0.690 \\ 
  & & 0.9 & 0.993 & 0.993 & 1.000 \\ 
\hline 
  100 &$(2,2)$ & 0 & 0.000 & 0.000 & 0.000 \\ 
  & & 0.2 & 0.000 & 0.004 & 0.037 \\ 
  & & 0.5 & 0.272 & 0.459 & 0.700 \\ 
  & & 0.9 & 0.996 & 0.998 & 1.000 \\ 
  &$(2,5)$ & 0 & 0.000 & 0.000 & 0.000 \\ 
  & & 0.2 & 0.000 & 0.003 & 0.032 \\ 
  & & 0.5 & 0.267 & 0.459 & 0.693 \\ 
  & & 0.9 & 0.994 & 0.999 & 1.000 \\ 
  &$(5,2)$ & 0 & 0.000 & 0.000 & 0.000 \\ 
  & & 0.2 & 0.000 & 0.004 & 0.047 \\ 
  & & 0.5 & 0.269 & 0.459 & 0.697 \\ 
  & & 0.9 & 0.987 & 0.998 & 0.999 \\ 
  \hline 
  1000 &$(2,2)$ & 0 & 0.000 & 0.000 & 0.000 \\ 
  & & 0.2 & 0.000 & 0.003 & 0.031 \\ 
  & & 0.5 & 0.280 & 0.476 & 0.707 \\ 
  & & 0.9 & 0.982 & 0.992 & 0.998 \\ 
  &$(2,5)$ & 0 & 0.000 & 0.000 & 0.000 \\ 
  & & 0.2 & 0.000 & 0.003 & 0.030 \\ 
  & & 0.5 & 0.264 & 0.459 & 0.693 \\ 
  & & 0.9 & 0.989 & 0.996 & 1.000 \\ 
  &$(5,2)$ & 0 & 0.000 & 0.000 & 0.000 \\ 
  & & 0.2 & 0.000 & 0.005 & 0.047 \\ 
  & & 0.5 & 0.279 & 0.474 & 0.706 \\ 
  & & 0.9 & 0.986 & 0.995 & 0.999 \\ 
   \hline
\end{tabular}
\end{table}

As an additional assessment of how FSEB performs in comparison to other tests in a similar setting, we carried out a number of additional simulation studies, in which FSEB was compared with Fisher's exact test and a score test,  over various settings of $n$, $\bm{\psi}$ and $\Delta$, as well as for different ranges of $m_i$ ($i=1\in[n]$). Comparisons were made using the $p$-values as well as false discovery rate (FDR) corrected $p$-values arising from FDR control methods \citep{wang2020false}, and are presented in the online Supplementary Materials (Tables S1--S8 and Figures S1--S8). It is evident in almost all cases (and especially in case C, which most closely resembles the real life application scenario) that (i) the power levels are very similar across methods, especially as values of $n$, $m_i$ ($i\in[n]$) and effect sizes increase, and (ii) in every case, there are some settings in which Fisher's test and the score test are anti-conservative (even after FDR correction), with their Type I error greatly exceeding the nominal levels of significance, while this never occurs for FSEB, even without FDR correction.

\section{\label{sec:Results-applications}Real-data applications} % feel free to change- just a placeholder (MG)

\subsection{The \texttt{Norberg} data}

We now wish to apply the FSEB CI construction from Section \ref{subsec:Confidence-intervals-Poisson}
to produce CIs in a real data application. We shall investigate the
$\texttt{Norberg}$ data set from the $\textsf{REBayes}$ package
of \citet{Koenker:2017tq}, obtained from \citet{Haastrup2000Comparison-of-s}.
These data pertain to group life insurance claims from Norwegian workmen.
Here, we have $n=72$ observations $\mathbf{D}_{n}$, containing total
number of death claims $X_{i}$, along with covariates $\mathbf{w}_{n}$,
where $w_{i}$ is the number of years of exposure, normalized by a
factor of 344, for $i\in\left[n\right]$. Here each $i$ is an individual
occupation group.

To analyze the data, we use the Poisson--gamma model and estimate
the generative parameters $\bm{\vartheta}_{n}^{*}$ using estimates
of form \eqref{eq: poisson--gamma estimator}. Here, each $\theta_{i}^{*}$
can be interpreted as an unobserved multiplicative occupation specific
risk factor that influences the number of claims made within occupation
group $i$. To obtain \textcolor{black}{individually-valid} $95\%$ CIs for each of the $n$ estimates, we then apply
the method from Section \ref{subsec:Confidence-intervals-Poisson}.
We present both the estimated risk factors and their CIs in Figure
\ref{fig:Estimates-of-risk}.

\begin{figure}
\begin{centering}
\includegraphics[width=15cm]{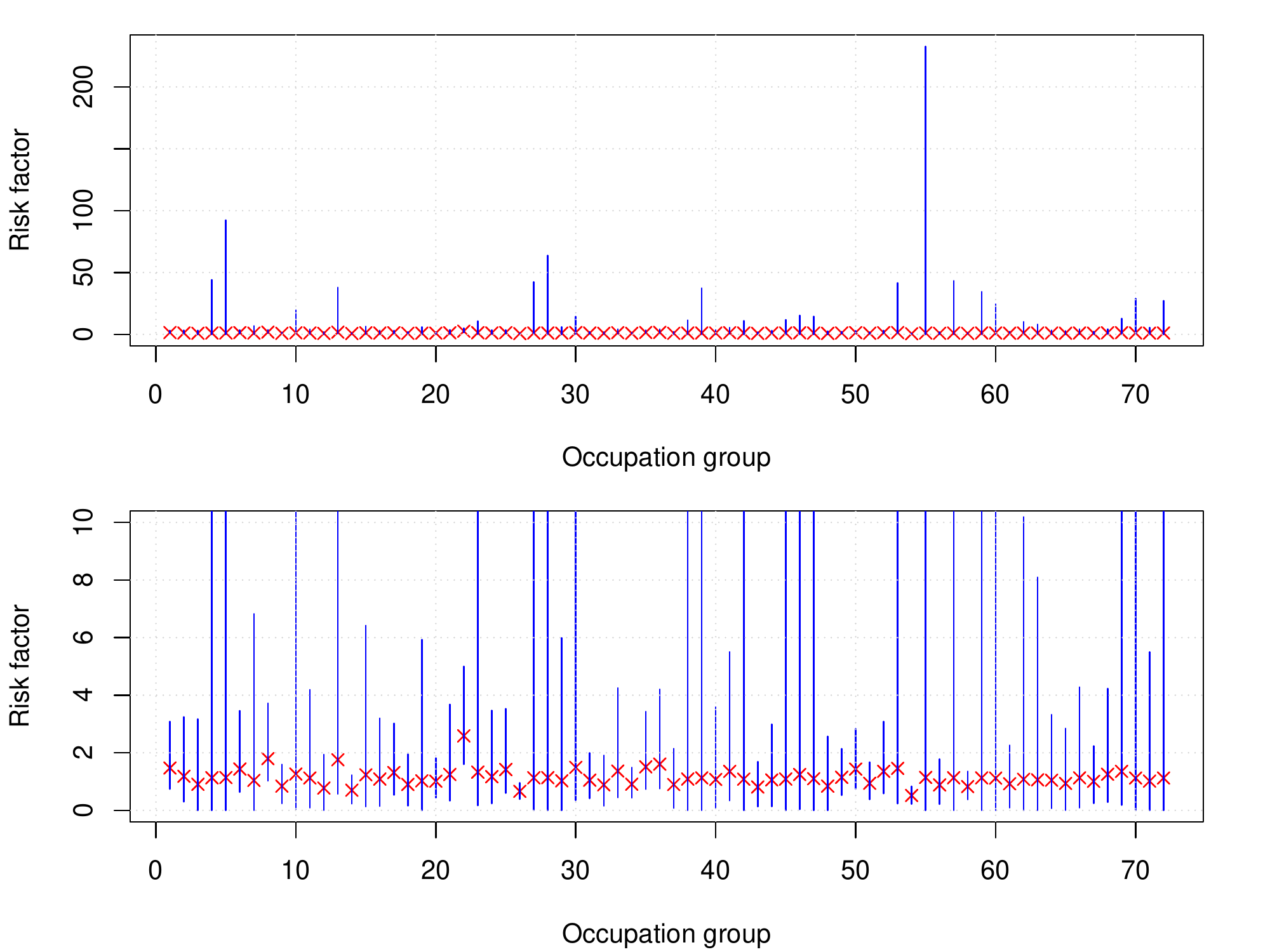}
\par\end{centering}
\caption{\label{fig:Estimates-of-risk}Estimates of risk factors $\bm{\vartheta}_{n}^{*}$
for the $\texttt{Norberg}$ data set along with associated $95\%$ CIs. The
estimated risk factor for each occupation group is depicted as a cross
and the associate \textcolor{black}{(individually-valid)} CI is depicted as a line. The top plot displays
the CIs at their entire lengths, whereas the bottom plot displays
only the risk factor range between 0 and 10.}

\end{figure}

From Figure \ref{fig:Estimates-of-risk}, we notice that most of the
estimates of $\bm{\vartheta}_{n}^{*}$ are between zero and two, with
the exception of occupation group $i=22$, which has an estimated
risk factor of $\theta_{22}^{*}=2.59$. Although the risk factors
are all quite small, the associated CIs can become very large, as can
be seen in the top plot. This is due to the conservative nature of
the CI constructions that we have already observed from Section \ref{subsec:Stein's-problem}.

We observe that wider CIs were associated with observations where $X_{i}=0$, with $w_{i}$ being small. In particular, the largest CI, occurring for $i=55$, has response $X_{55}=0$ and the smallest covariate value in the data set: $w_{55}=4.45$. The next largest CI occurs for $i=5$ and also corresponds to a response $X_{5}=0$ and the second smallest covariate value $w_{5}=11.30$.

However, upon observation of the bottom plot, we see that although
some of the CIs are too wide to be meaningful, there are still numerous
meaningful CIs that provide confidence regarding the lower limits
as well as upper limits of the underlying risk factors. In particular,
we observe that the CIs for occupation groups $i=26$ and $i=54$
are remarkably narrow and precise. \textcolor{black}{Of course, the preceding inferential observations are only valid when considering each of the $n$ CIs, individually, and under the assumption that we had chosen to draw inference regarding the corresponding parameter of the CI, before any data are observed.} 

\textcolor{black}{If we wish to draw inference regarding all $n$ elements of $\bm{\vartheta}_{n}^{*}$,
simultaneously, then we should instead construct a $100\left(1-\alpha\right)\%$
simultaneous confidence set $\bar{\mathcal{C}}^{\alpha}\left(\mathbf{D}_{n}\right)$,
with the property that
\[
\text{Pr}_{\bm{\vartheta}_{n}^{*}}\left[\bm{\vartheta}_{n}^{*}\in\bar{\mathcal{C}}^{\alpha}\left(\mathbf{D}_{n}\right)\right]\ge1-\alpha\text{.}
\]
Using Bonferroni's inequality, we can take $\bar{\mathcal{C}}^{\alpha}\left(\mathbf{D}_{n}\right)$
to be the Cartesian product of the individual $100\left(1-\alpha/n\right)\%$
(adjusted) CI for each parameter $\theta_{i}^{*}$:
\[
\bar{\mathcal{C}}^{\alpha}\left(\mathbf{D}_{n}\right)=\prod_{i=1}^{n}\mathcal{C}_{i}^{\alpha/n}\left(\mathbf{D}_{n}\right)\text{.}
\]
Using the $\alpha=0.05$, we obtain the $95\%$ simultaneous confidence
set that appears in Figure \ref{fig: Simultaneous Norberg}. We observe
that the simultaneous confidence set now permits us to draw useful
inference regarding multiple parameters, at the same time. For example,
inspecting the $n$ adjusted CIs, we observe that the occupations
corresponding to indices $i\in\left\{ 8,22,50\right\} $ all have
lower bounds above $0.5$. Thus, interpreting these indices specifically,
we can say that each of the three adjusted confidence intervals, which
yield the inference that the risk factors $\theta_{i}^{*}>0.5$ for
$i\in\left\{ 8,22,50\right\} $, contains the parameter $\theta_{i}^{*}$
with probability $0.95$, under repeated sampling.}

\begin{figure}
\begin{centering}
\includegraphics[width=15cm]{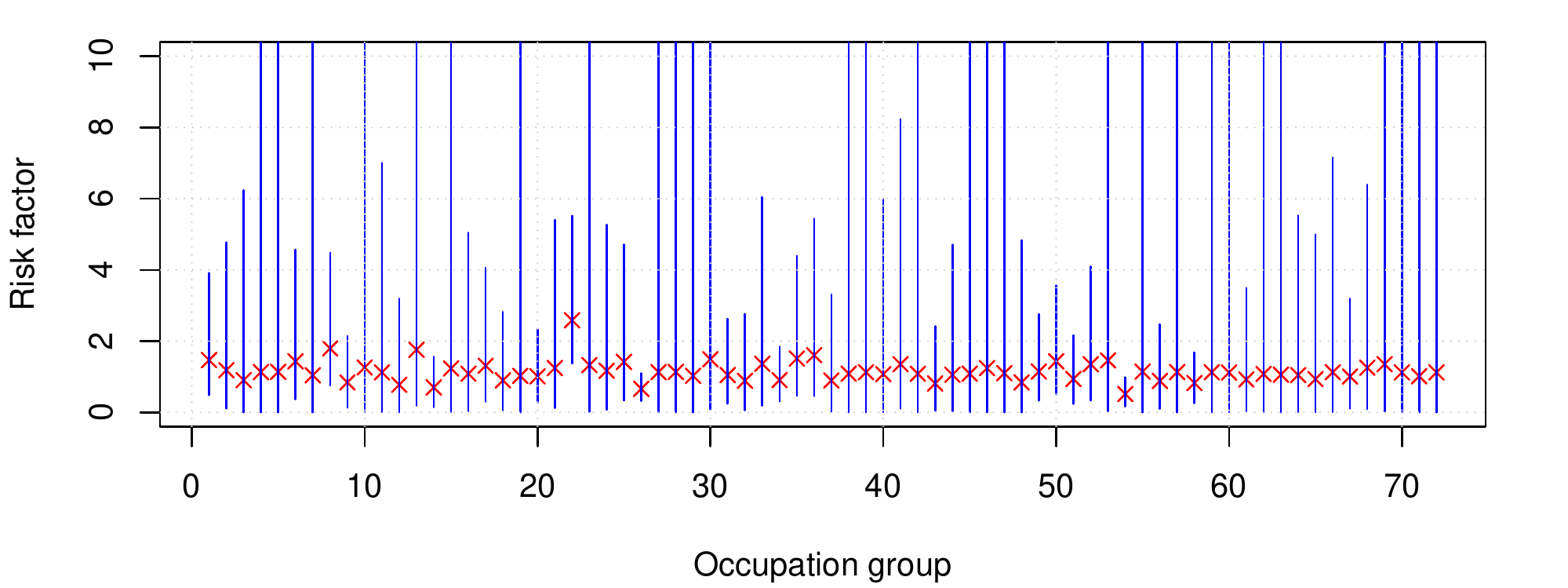}
\par\end{centering}
\caption{\label{fig: Simultaneous Norberg} \textcolor{black}{Estimates of risk factors $\bm{\vartheta}_{n}^{*}$
for the \texttt{Norberg} data set along with the associated simultaneous
$95\%$ confidence set. The estimated risk factors for each occupation
group is depicted as a cross and the simultaneous confidence set can
be constructed via the cartesian product of the adjusted CIs, depicted
as lines. The plot is focused on the risk factor range between 0 and
10.}}

\end{figure}

\textcolor{black}{Since our individual CI and adjusted CI constructions are $e$-CIs,
one can alternatively approach the problem of drawing simultaneously
valid inference via the false coverage rate (FCR) controlling techniques
of \cite{xu2022post}. Using again the parameters $\theta_{i}^{*}$ corresponding
to $i\in\left\{ 8,22,50\right\} $, as an example, we can use Theorem
2 of \cite{xu2022post} to make the statement that the three adjusted CIs $\mathcal{C}_{i}^{3\alpha/n}\left(\mathbf{D}_{n}\right)$,
for $i\in\left\{ 8,22,50\right\} $, can be interpreted at the FCR
controlled level $\alpha\in(0,1)$, in the sense that
\[
\text{E}_{\bm{\vartheta}_{\mathbb{I}\left(\mathbf{D}_{n}\right)}^{*}}\left[\frac{\sum_{i\in\mathbb{I}}\left\llbracket \theta_{i}^{*}\notin\mathcal{C}_{i}^{\left|\mathbb{I}\left(\mathbf{D}_{n}\right)\right|\alpha/n}\left(\mathbf{D}_{n}\right)\right\rrbracket }{\max\left\{ 1,\left|\mathbb{I}\left(\mathbf{D}_{n}\right)\right|\right\} }\right]\le\alpha\text{,}
\]
where $\mathbb{I}\left(\mathbf{D}_{n}\right)$ is a data-dependent
subset of parameter indices. In particular, we observe the realization
$\left\{ 8,22,50\right\} $ of $\mathbb{I}\left(\mathbf{D}_{n}\right)$,
corresponding to the data-dependent rule of selecting indices with
adjusted CIs $\mathcal{C}_{i}^{\alpha/n}\left(\mathbf{D}_{n}\right)$
with lower bounds greater than $0.5$. Here, $\left\llbracket \mathsf{A}\right\rrbracket =1$
if statement $\mathsf{A}$ is true and $0$, otherwise.}

\textcolor{black}{Clearly, controlling the FCR at level $\alpha$ yields narrower CIs
for each of our the three assessed parameters than does the more blunt
simultaneous confidence set approach. In particular, the $95\%$ simultaneous
adjusted CIs obtained via Bonferroni's inequality are $\left(0.775,4.485\right)$,
$\left(1.375,5.520\right)$, and $\left(0.505,3.565\right)$, and
the $0.05$ level FCR controlled adjusted CIs are $\left(0.810,4.300\right)$,
$\left(1.430,5.390\right)$, and $\left(0.555,3.390\right)$, for
the parameters $\theta_{i}^{*}$ corresponding to the respective parameters
$i\in\left\{ 8,22,50\right\} $. Overall, these are positive results as we
do not know of another general method for generating CIs in this EB setting, whether individually or jointly.}

\subsection{\label{sec:betabinom-applic} Differential
  methylation detection in bisulphite sequencing data }

DNA methylation is a chemical modification of DNA caused by the
addition of a methyl ($CH_3$-) group to a DNA nucleotide -- usually a C that is followed by a G -- called a CpG site, which is an important factor in controlling gene expression over the human
genome. Detecting differences in the methylation patterns between normal and
ageing cells can shed light on the complex biological processes underlying human
ageing, and hence has been an important scientific problem
over the last decade \citep{Smith13}. Methylation patterns can be detected using high-throughput
bisulphite sequencing experiments \citep{Krueger12}, in which data are
generated in the form of sequences of numbers of methylated cytosines, $x_{ig}$, among the total counts of cytosines, $m_{ig}$, for $n$  CpG sites on a  genome $(i\in[n])$, for $G$ groups of cell types $g\in[G]$. Often, 
there are $G=2$ groups, as
in our example that follows, 
for which the question of interest is to detect regions of
differential methylation in the DNA of normal and ageing cells. Based on the setup above, a set of bisulphite sequencing data from an experiment with $G$ groups might be considered as $G$ series of (possibly correlated) observations from non-identical binomial distributions. The degree of dependence between adjacent CpG sites typically depends on the genomic distance between these loci, but since these are often separated by hundreds of bases,  for the moment it is assumed that this correlation is negligible and is not incorporated into our model.

  \subsubsection{Application to Methylation data from Human chromosome 21}

We evaluated the test statistic
$T_{\mathbb{I}}\left(\mathbf{D}_{n}\right)$ over a paired segment of
methylation data from  normal and ageing cells, from 
$100,000$ CpG sites on human chromosome 21
\citep{Cruickshanks13}. After data cleaning and filtering (to remove
sites with too low or too high degrees of experimental coverage, that
can introduce errors), $58,361$ sites remained
for analysis.
Figure \ref{fig-betabinom-methyl} shows the predicted  demarcation of the data
into differentially and
non-differentially methylated sites over the entire
region, at three cutoff levels of significance, overlaid with a
moving average  using a window size of 10 sites. It was observed that
large values of the test statistic
were often found in grouped clusters, which would be biologically
meaningful, as loss of methylation in ageing cells is more likely
to be highly region-specific, rather than randomly scattered over the
genome. The overall rejection
rates for the FSEB procedure  corresponding to significance levels of $\alpha= 0.0005, 0.05, 0.02$
and $0.01$ were found to be $0.0012$, 
$0.0154$, $0.0092$, and $ 0.0064$, respectively.

\begin{figure}
  \begin{center}
    \includegraphics[width=\textwidth]{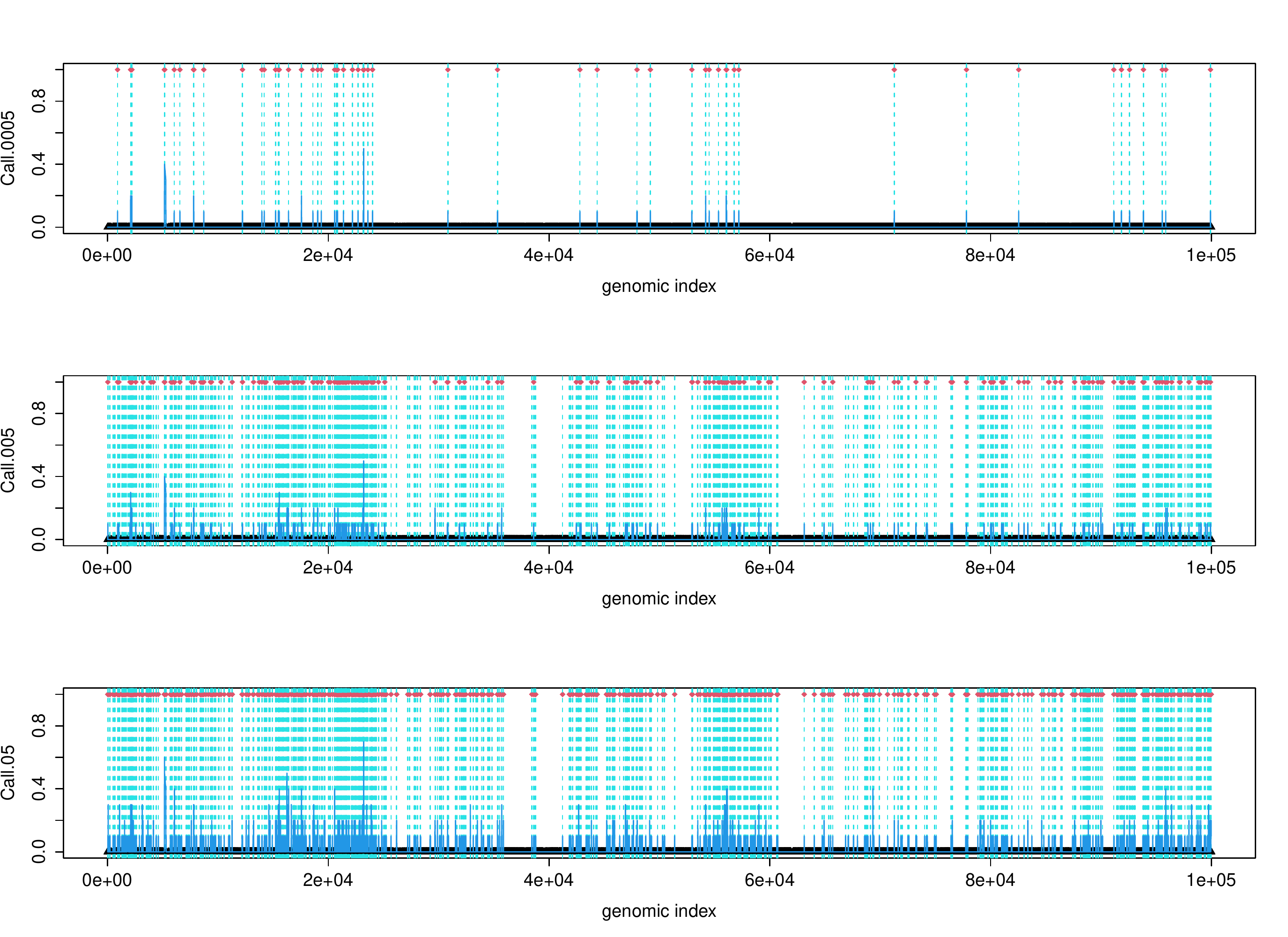} 
   
    \caption{FSEB test statistics over a segment of methylation data. The
      panels show the demarcation of loci into differentially
      methylated (coded as ``$1$'') and non-differentially methylated sites (coded as ``$0$'') 
with an overlay of a
moving average with a window size of 10 CpG sites, at significance
level cutoffs of $0.0005$, $0.005$, and $0.05$.
}\label{fig-betabinom-methyl}
    \end{center}
  \end{figure}

As a comparison to other methods for detecting
differential methylation, we also applied site-by-site Fisher tests
and score tests as implemented for bisulphite sequencing data in the {\sf R} Bioconductor package {\sf
  DMRcaller} \citep{DMRcallerNAR2018}. For purposes of comparison, we used
two significance level cutoffs of $0.05$ and $0.0005$ for our FSEB test
statistic, along with the same cutoffs subject to a  Benjamini--Hochberg
FDR correction for the other two testing methods. 
Figure \ref{fig-methyl-comp} shows the comparison between the
calculated site-specific $p$-values of the Fisher and score tests with
the calculated FSEB test statistic (all on the logarithmic scale) over
the entire genomic segment, which indicates a remarkable degree
of overlap in the regions of differential methylation. There are,
however, significant differences as well, in both the numbers of
differential methylation calls and their location. In particular, the
FSEB test statistic appeared to have stronger evidence for differential
methylation in two regions, one on the left side of the figure, 
and one towards the centre. The Fisher test, being the most conservative, almost missed this central
region (gave a very weak signal), while the score test gave a very
high proportion of differential methylation calls compared to both
other methods -- however, the results from the score test may not be as reliable as many cells contained small numbers of counts which may render the test assumptions invalid.
Table \ref{tab-methyl-comp} gives a summary of the overlap and
differences of the results from
the different methods at two levels of significance, indicating that with FDR corrections,  the Fisher test appears to be the most conservative, the score test
the least conservative, and the FSEB procedure in-between the two. 
We also calculated, for each pair of methods, the 
proportion of matching calls, defined as the ratio of the number of sites predicted by both methods as either differentially methylated, or non-differentially methylated, to the total number of sites. These proportions indicated a high degree of concordance, especially between FSEB and Fisher tests, with the score test showing the least degree of concordance at both levels of significance. As expected, the degree of concordance decreased with an increase in $\alpha$, but only slightly so, between the FDR-corrected Fisher test and FSEB.

\begin{figure}
  \begin{center}
    \includegraphics[width=\textwidth]{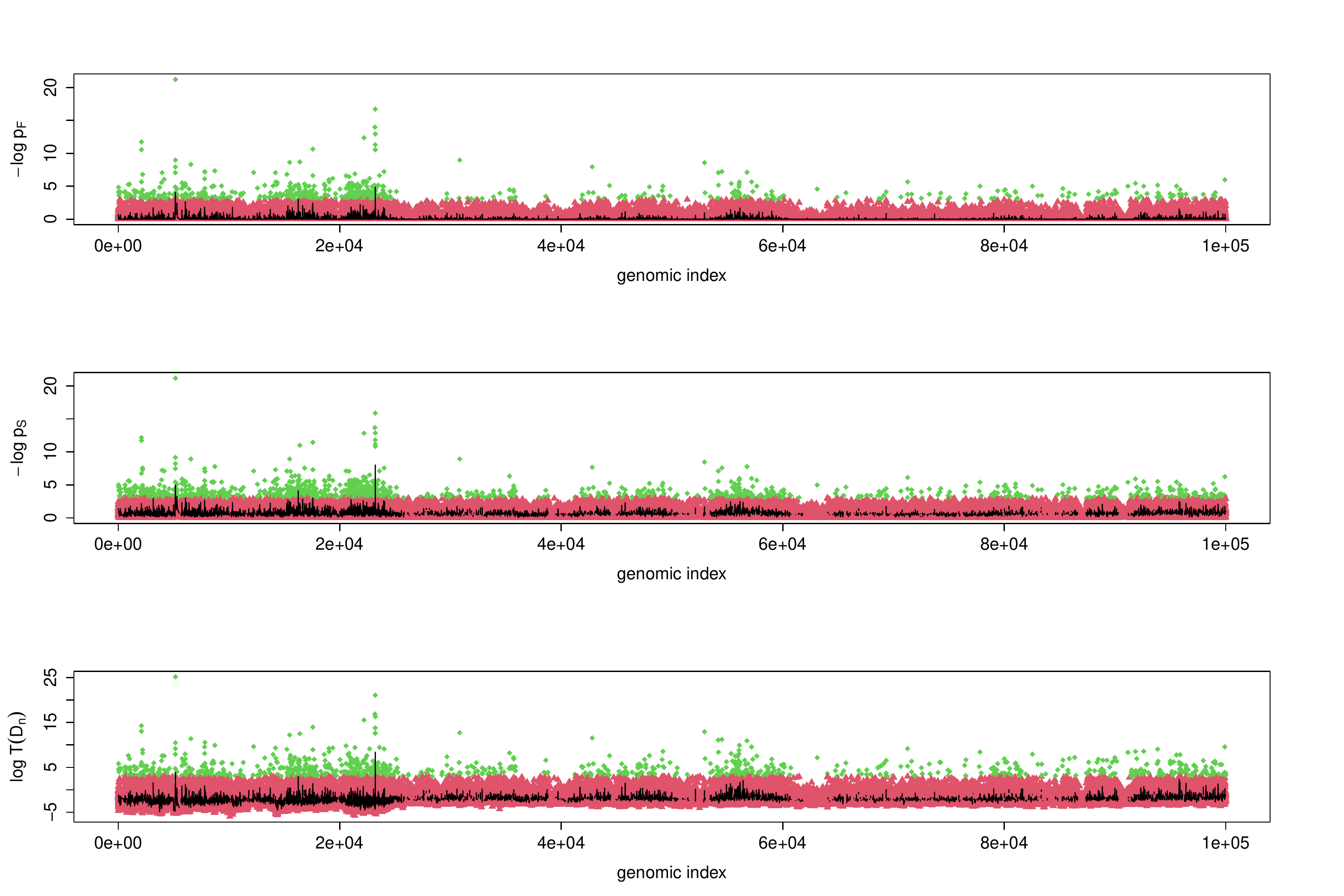} 
   
    \caption{Results of three testing procedures to detect sites of
      differential methylation over a segment of methylation data. The
      first two
      panels show the negative logarithms of the FDR-corrected $p$-values
      for the (i) Fisher test ($-\log p_F$) and (ii) score test ($-\log p_S$), while the third
      panel shows the logarithm of the FSEB test statistic ($\log T(D_n)$). The black
     curve in each plot corresponds to a
      moving average with a window size of 10.  The points are coloured
      by differential methylation state call: green if differentially
      methylated, and red if not, at test size
      $0.05$. 
}\label{fig-methyl-comp}
    \end{center}
  \end{figure}

\begin{table}
\caption{\label{tab-methyl-comp}   Comparison of differential
  methylation calling results between different methods: (i) FSEB (ii)
  Fisher tests with FDR-adjusted $p$-values (FF) (iii) Fisher tests,
  unadjusted (F)
  (iv) score tests with FDR-adjusted $p$-values (SF) and (v) score
  tests,  unadjusted (S).
  The upper table gives the
  proportions of sites called to be differentially expressed under the tests of
 sizes $\alpha \in \{0.0005, 0.05\}$. The lower table gives the proportion of
overlaps between differential methylation calls from each pair of
methods at a fixed level  $\alpha \in \{0.0005, 0.05\}$.}
\begin{centering}
  \begin{tabular}{lcc}
    \hline 
  &  \multicolumn{2}{c}{Proportion of rejections at level}\\ 
    & $\alpha= 0.0005$ & $\alpha= 0.05$ \\
\hline 
  \hline
  FSEB &  0.0012     &   0.0154   \\
FF &  0.0003  & 0.0097\\
F&   0.0098   & 0.1102\\
  SF &  0.1333    & 0.1528 \\
S &  0.1457    & 0.2926\\   
\hline                                                        
\end{tabular}
\par\end{centering}

\begin{centering}
  \begin{tabular}{lllll lllll}
    \hline 
 &\multicolumn{9}{c}{Proportion of overlap in matching calls at level }\\
&  \multicolumn{4}{c}{$\alpha=0.0005$} & & \multicolumn{4}{c}{$\alpha=0.05$} \\
\hline 
  \hline
  Method     &    FF  &    F    &     SF    &    S &    Method&                FF  &    F    &     SF    &    S\\
\hline
FSEB &    0.999 & 0.991 & 0.866 & 0.856 &        FSEB &           0.992  & 0.905  & 0.860  & 0.723\\
FF     &             & 0.991 & 0.867 & 0.855 &         FF     &             &        0.900  & 0.857  & 0.717\\
F     &               &           & 0.858 & 0.864 &          SF     &             &   &      0.777  & 0.818\\
SF       &              &         &        &       0.988 &           S       &               &   &   &             0.860\\
  \hline
\end{tabular}
\par\end{centering}
\end{table}

\section{\label{sec:Conclusion}Conclusion}

EB is a powerful and popular paradigm for conducting parametric inference
in situations where the DGP can be assumed to possess a hierarchical
structure. Over the years, general frameworks for point estimation
have been developed for EB, such as via the shrinkage estimators of
\cite{Serdobolskii2008Multiparametric} or the various method of moments and likelihood-based
methods described in \citet[Sec. 3]{Maritz:1989vp}. Contrastingly, the construction
of interval estimators and hypothesis tests for EB parameters rely
primarily on bespoke derivations and analysis of the specific models
under investigation.

In this paper, we have adapted the general universal inference framework
for finite sample valid interval estimation and hypothesis testing
of \cite{Wasserman:2020aa} to construct a general framework within the EB
setting, which we refer to as the FSEB technique. In Section 2, we proved
that these FSEB techniques generate valid confidence sets and hypothesis
tests of the correct size. In Section 3, we demonstrated via numerical
simulations, that the FSEB methods can be used in well-studied synthetic
scenarios. There, we highlight that the methods can generate meaningful
inference for realistic DGPs. This point is further elaborated in
Section 4, where we also showed that our FSEB approach can be usefully
applied to draw inference from real world data, in the contexts of
insurance risk and the bioinformatics study of DNA methylation.

We note that although our framework is general, due to it being Markov
inequality-based, it shares the same general criticism that may be
laid upon other universal inference methods, which is that the confidence
sets and hypothesis tests can often be conservative, in the sense
that the nominal confidence level or size is not achieved. 
\textcolor{black}{The lack of power due to the looseness of Markov's inequality was first mentioned and discussed in \cite{Wasserman:2020aa}, where it is also pointed out that, in the universal inference setting, the logarithm of the analogous ratio statistics to \eqref{eq: Ratio} have tail probabilities that scale, in $\alpha$, like those of $\chi^{2}$ statistics. The conservativeness of universal inference constructions is further discussed in the works of \cite{dunn2021gaussian}, \cite{tsenote}, and \cite{strieder2022choice}, where the topic is thoroughly explored via simulations and theoretical results regarding some classes of sufficiently regular problems.
We observe
this phenomenon in the comparisons in Sections 3.1 (and further expanded in the Supplementary
Materials). 
We also explored subsampling-based tests within the FSEB framework, along the lines proposed by \cite{dunn2021gaussian}, which led to very minor increases in power in some cases with small sample sizes without affecting the Type I error. With such an outcome not entirely discernible from sampling error, and with the substantial increase to computational cost, it does not seem worthwhile to employ the subsampling-based approach here. A possible reason for the lack improvement in power observed, despite subsampling, can be attributed to the fact that the sets $\mathbb{I}$, and their complements, are not exchangeable; since the indices fundamentally define the hypotheses and parameters of interest.
}

However, we note that since the methodology falls within
the $e$-value framework, it also inherits desirable properties, such
as the ability to combine test statistics by averaging \citep{Vovk:2021vi},
and the ability to more-powerfully conduct false discovery rate control
when tests are arbitrarily dependent \citep{wang2020false}.

Overall, we believe that FSEB techniques can be usefully incorporated
into any EB-based inference setting, especially when no other interval
estimators or tests are already available, and are a useful addition to the statistical tool set. Although a method that
is based on the careful analysis of the particular setting is always
preferable in terms of exploiting the problem specific properties
in order to generate powerful tests and tight intervals, FSEB methods
can always be used in cases where such careful analyses may be mathematically
difficult or overly time consuming.

\clearpage
\bibliographystyle{plainnat}
%\bibliography{20210422_MASTERBIB}
\bibliography{20210625_MASTERBIB}

\end{document}